\documentclass[lettersize,journal]{IEEEtran}
\usepackage{amsmath,amsfonts}
\usepackage{algorithmic}
\usepackage{algorithm}
\usepackage{array}
\usepackage[caption=false,font=normalsize,labelfont=sf,textfont=sf]{subfig}
\usepackage{textcomp}
\usepackage{stfloats}
\usepackage{url}
\usepackage{verbatim}
\usepackage{graphicx}
\usepackage{cite}
\usepackage{booktabs}
\usepackage{siunitx}
\usepackage{ragged2e}
\usepackage{tabularx}
\usepackage{adjustbox}

\usepackage{caption}
\usepackage{subcaption}

\usepackage{amsthm}

\newtheoremstyle{mytheoremstyle}
  {3pt}
  {3pt}
  {\itshape}
  {}
  {\bfseries}
  {.}
  {.5em}
  {}
  
\theoremstyle{mytheoremstyle}
\newtheorem{theorem}{\textbf{Theorem}}
\newtheorem{lemma}{\textbf{Lemma}}
\newtheorem{assumption}{\textbf{Assumption}}

\hyphenation{op-tical net-works semi-conduc-tor IEEE-Xplore}

\begin{document}

\title{pFedSOP : Accelerating Training Of Personalized Federated Learning Using Second-Order Optimization}
\author{Mrinmay Sen, Chalavadi Krishna Mohan,~\IEEEmembership{Senior Member,~IEEE}
\thanks{Mrinmay Sen and Chalavadi Krishna Mohan are with the Department of Artificial Intelligence, Indian Institute of Technology, Hyderabad 502285, India (e-mail: ai20resch11001@iith.ac.in; ckm@cse.iith.ac.in).}
\thanks{Manuscript received April 19, 2021; revised August 16, 2021.} }

\markboth{Journal of \LaTeX\ Class Files,~Vol.~14, No.~8, August~2021}%
{Shell \MakeLowercase{\textit{et al.}}: A Sample Article Using IEEEtran.cls for IEEE Journals}


\maketitle

\begin{abstract}
Personalized Federated Learning (PFL) enables clients to collaboratively train personalized models tailored to their individual objectives, addressing the challenge of model generalization in traditional Federated Learning (FL) due to high data heterogeneity. However, existing PFL methods often require increased communication rounds to achieve the desired performance, primarily due to slow training caused by the use of first-order optimization, which has linear convergence. Additionally, many of these methods increase local computation because of the additional data fed into the model during the search for personalized local models. One promising solution to this slow training is second-order optimization, known for its quadratic convergence. However, employing it in PFL is challenging due to the Hessian matrix and its inverse. In this paper, we propose pFedSOP, which efficiently utilizes second-order optimization in PFL to accelerate the training of personalized models and enhance performance with fewer communication rounds. Our approach first computes a personalized local gradient update using the Gompertz function-based normalized angle between local and global gradient updates, incorporating client-specific global information. We then use a regularized Fisher Information Matrix (FIM), computed from this personalized gradient update, as an approximation of the Hessian to update the personalized models. This FIM-based second-order optimization speeds up training with fewer communication rounds by tackling the challenges with exact Hessian and avoids additional data being fed into the model during the search for personalized local models. Extensive experiments on heterogeneously partitioned image classification datasets with partial client participation demonstrate that pFedSOP outperforms state-of-the-art FL and PFL algorithms.
\end{abstract}

\begin{IEEEkeywords}
Federated learning, data heterogeneity, personalization of local models, second-order optimization, Fisher information matrix (FIM).
\end{IEEEkeywords}

\section{Introduction}

\IEEEPARstart{T}{he} generic federated learning (FL) \cite{McMahanaistats2017} is a distributed learning framework, where a group of data sources or clients collaboratively train a shared global model maintained by a central server, by performing infrequent aggregation of locally trained models in the server, enabling the local clients to collaborate with other clients without exchanging their locally stored raw data, thus helping to maintain local data privacy. A key challenge in this generic federated learning is the heterogeneous data across clients as different clients gather their personal data with different systems or processes, resulting in poor generalization performance of this single global model at local clients due to dissimilarity between local and global data distributions \cite{KairouzMABBBBCC21,FL_survey_HuangYSWLDY24}. In addition to data heterogeneity, local clients may have limited and scarce data, resulting in overfitting of the local model if we use conventional local training at the clients. To address these issues with generic federated learning and conventional local training with limited local data, personalized federated learning (PFL) \cite{Tancorr2021} has been proposed. It aims to collaboratively train a set of customized local models by using relevant global information in each of the local clients based on their local objectives. However, although existing PFL methods outperform traditional generic FL methods, they suffer from slow training due to the use of first-order optimization methods, such as stochastic gradient descent (SGD) \cite{amari1993backpropagation}, which exhibits linear convergence \cite{wright2006numerical}. SGD iteratively update model parameters ($\textbf{x}_\tau \in \mathbb{R}^d$) using only the first-order derivative of the objective function $P$ with respect to model parameters, i.e., gradient $\textbf{g}_\tau = \nabla P(\textbf{x}_{\tau-1}) \in \mathbb{R}^{d}$, as shown in Eq. \ref{eq1}. Here, the subscript $\tau$ refers to the values computed at iteration $\tau$.

\begin{equation}
\label{eq1}
    \textbf{x}_{\tau} = \textbf{x}_{\tau-1} - \eta \textbf{g}_\tau
\end{equation}
This leads to an increase in communication rounds in PFL while aiming to achieve the desired performance from the personalized local models. Additionally, these methods either result in higher local computation due to the additional data fed forward to the model during the search for personalized local models, or they suffer from communication inefficiencies by requiring the transfer of all client models from the server to each client. Specifically, the additional data forwarding during the personalization process increases local computation by $O(N_i d)$, and the total computation cost becomes $O(2N_i d)$, where $O(N_i d)$ accounts for the computation in the local training (similar to FedAvg), and an additional $O(N_i d)$ comes from the personalization process, here $N_i$ is number of local samples and $d$ is number of model parameters. One potential solution to the issue of slow training of personalized local models is to utilize second-order optimization (Newton method of optimization), known for its quadratic convergence rate \cite{Agarwaljmlr2017,Tankaria2021corr,Martens2015jmlr}, where the Hessian curvature along with the gradient is used while iteratively optimizing model parameters, as depicted in Eq. \ref{eq2}. Here, $\textbf{H}_\tau = \nabla^2 P(\textbf{x}_{\tau-1}) \in \mathbb{R}^{d \times d}$ represents the Hessian of the objective function computed at iteration $\tau$.
\begin{equation}
\label{eq2}
    \textbf{x}_{\tau} = \textbf{x}_{\tau-1} - \eta {\textbf{H}_\tau}^{-1} \textbf{g}_\tau
\end{equation}
Even though second-order optimization is advantageous in terms of a faster training, it has the drawback of requiring the computation and storage of the full Hessian matrix (a square matrix of size $d \times d$), which may not be suitable for large-scale models and large datasets. This raises questions about the practicality of applying second-order optimization, as it has computation cost of $O(N_id^2)$  for calculating the Hessian and a computation cost of $O(d^3)$ for inverting the Hessian. Additionally, it requires $O(d^2)$ space for storing the Hessian and its inverse, which can be a significant issue for resource-constrained devices. Another issue with utilizing second-order optimization in FL is that finding a global model by simply aggregating local models trained using the second-order Hessian is not feasible. This is because $\textbf{H}_t^{-1} \neq \frac{1}{K}\sum_{i = 1}^K \textbf{H}_{it}^{-1}$, where $\textbf{H}_{it}$ is the Hessian of the $i$-th client's local objective, $\textbf{H}_t$ is the average Hessian over all the clients (i.e., the global Hessian), and $K$ is number of clients. As a result, the aggregation process required in FL does not align with the requirements of second-order optimization, where the global model needs to be updated using the global Hessian, making its use difficult \cite{DDerezinskiM19,bischoff2021second}. 

To accelerate PFL training without the need for additional data input into the model, we propose pFedSOP, which aims to efficiently utilize second-order optimization while updating the personalized local models.This is achieved by addressing the computational and storage challenges associated with the Hessian and its inverse, using a regularized Fisher Information Matrix (FIM) \cite{sen2024sofim}, created through personalized local gradient updates, as a substitute for the Hessian in second-order optimization, since for probabilistic objective, the FIM is equivalent to the Hessian \cite{martens2020new}. In pFedSOP, each client computes the personalized local gradient update using the normalized angle between the local and global gradient updates, based on the Gompertz function \cite{gibbs2000variational} to capture relevant global information into the local gradient update. While employing FIM-based second-order optimization with this client-specific local gradient update, we apply the Sherman–Morrison formula for matrix inversion on the regularized FIM. This approach reduces the computational cost associated with the exact FIM and its inversion and eliminates the need to store the full FIM. The use of FIM with personalized local gradient update also removes the need to maintain a global model aggregated from local models trained with second-order optimization. Second-order optimization with the regularized FIM constructed using personalized local gradient updates in PFL enables faster training of personalized local models with fewer communication rounds, without the extra computation required for forwarding data to the model during the process of finding the personalized local models. Since pFedSOP only requires communication of the local gradient update from the client to the server and the global gradient update from the server to the client, it is efficient in terms of transmission costs, similar to FedAvg. The effectiveness of pFedSOP is validated through extensive experiments on several heterogeneously partitioned image classification datasets with partial client participation, where it is compared to state-of-the-art FL algorithms. The results demonstrate the superior performance of pFedSOP in terms of achieving better test accuracy and a faster reduction in training loss, with comparatively fewer communication rounds. Additionally, we provide the convergence analysis of the personalized local models trained using pFedSOP. As we use the FIM as the Hessian, our proposed pFedSOP is applicable to probabilistic objectives, such as image classification with the categorical cross-entropy loss function. The main contributions of this paper are as follows:
\begin{itemize} 
    \item We propose pFedSOP, a novel approach that reduces communication rounds in personalized federated learning (PFL) by accelerating the training of personalized local models through the efficient application of second-order optimization, leveraging its quadratic convergence.
    
    \item We propose a novel approach for constructing the regularized Fisher Information Matrix (FIM) using personalized local gradient update to approximate the Hessian in second-order optimization for PFL.
    
    \item  We directly derive the personalized local model using the Sherman–Morrison formula for matrix inversion on this regularized FIM, effectively addressing the computational difficulties associated with the exact Hessian and its inverse during personalized model updates.
    
    \item We introduce a new technique to compute personalized local gradient update based on Gompertz function that normalizes the angle between local and global gradients and gives the aggregation weight for them. This method effectively integrates client-specific global information into each client's gradient update.
    
    \item To the best of our knowledge, this is the first work to apply second-order optimization technique in PFL to accelerate the training of personalized models by addressing the challenges associated with the Hessian and its inverse.

\end{itemize}
The rest of the paper is organized as follows. section \ref{rel_w} shows related works, Section \ref{sc1} describes about preliminaries, section \ref{sc4} describes our proposed method, section \ref{sc5} shows our experiments $\&$ results and section \ref{sc6} concludes the paper.   

\section{Related work}
\label{rel_w}
The related works of improving FL training in heterogeneous data distribution can be  grouped into two categories. one is generic FL and another one is PFL. 
\subsection{Generic FL}
Exiting methods of generic FL includes FedProx \cite{Limlsys2020}, SCAFFOLD \cite{Karimireddyicml2020},  MOON \cite{Licvpr2021}, FedGA \cite{DandiBJgradalign2022}, FedCM \cite{xu2021fedcm}, FCCL+ \cite{HuangYSD24Similarity}, FedFA \cite{featureshiftZhouYWK24}, FedExP \cite{fedexp} etc.These methods can be viewed as the modifications on FedAvg \cite{McMahanaistats2017}. Same as FedAvg, this methods aims to optimize a single global model. To reduce the variability of the locally trained models caused by data heterogeneity, FedProx adds a proximal term ($\mu$) with the local objective. SCAFFOLD uses control variates in local training to handle the issue of drift between local and global updates caused by heterogeneous data across clients.  During local training, MOON utilizes contrastive learning while conducting local training to limit the variability between the local models and the global model. Instead of directly using the global model as an initial model during local training, FedGA modifies the global model with the measure of displacement of the global and local gradients before starting local training. FedExP uses a separate learning rate in the server, which is determined adaptively. To do so, FedExP utilizes extrapolation mechanism of Projection Onto Convex Sets (POCS) algorithm. FedCM also uses two separate learning rate like FedExP for both the server $\&$ clients and modifies the local gradient descent a weighted average of current local gradient and previous step's global gradient. FCCL+ tackles the challenges of data heterogeneity and catastrophic forgetting in Federated Learning (FL) by employing federated correlation and similarity learning, alongside non-target distillation. FedFA mitigates data heterogeneity arising from feature shifts caused by acquisition discrepancies across clients through federated feature augmentation.

Even these modifications on FedAvg perform better than FedAvg, when the label of data heterogeneity across clients are too high these methods fails to generalize well for all the clients as these methods are depended on a single global model \cite{Tancorr2021}. 

\subsection{Personalized FL}
To overcome these issue of generalization of the single global model in generic FL, researchers shifted their attention towards PFL, in which instead of training a single global model, each client is allowed to train its own customized model. Exiting PFL algorithms include FedBABU \cite{oh2021fedbabu}, FedRep \cite{collins2021exploiting}, FedPAC \cite{xu2023personalized}, FedCR \cite{zhang2023fedcr}, pFedMe \cite{DinhTN20NeurIPS}, Ditto \cite{li2021ditto},  FedPHP \cite{li2021fedphp}, FedAMP \cite{huang2021fedamp}, FedFomo \cite{zhang2020personalized}, FedDWA \cite{liu2023feddwa},  FedALA \cite{zhang2023fedala} etc. These methods can be grouped into three catagories as follows: \\
\textbf{Methods that divide the model :}
Taking inspiration from representation learning, FedRep, FedBABU and FedPAC divide each of the local model into feature extractor and  classifier head and only finds the global feature representation by aggregating all the locally trained feature extractors. Then fine tunes the classifier head in each client for making personalized model. FedCR also leverages a common representation learned across clients by employing a conditional mutual information (CMI) regularizer, which aligns local and global feature distributions.\\
\textbf{Methods that uses local regularization :}
pFedMe regularizes the  clients’ loss function using Moreau envelopes to personalize the local training. Similar to pFedMe, Ditto learns additional personalized model for each client using a proximal term with the local objective
to incorporate global information into local clients. \\
\textbf{Methods that performs personalized aggregation :}
FedPHP finds the personalized local model using moving
average of global and local models with a predefined hyperparameter. FedAMP uses attention-inducing function for making personalized model for each client. FedFomo finds personalized model for each client by aggregating other client's models with this client's model. The weights for aggregation is approximated using local models from other clients. FedFomo needs to download other clients models in each client.  Instead of downloading other clients' models, FedDWA performs personalized aggregation for each client on the server using its individual local guidance model, which is trained through local one-step adaptation. So, FedDWA requires to broadcast the local model and along with local guidance model from each client to the server. FedALA adaptively trains the aggregation weights of the global model and local models to find personalized models using local data and a local SGD optimizer, which requires additional training cost. 

Although existing PFL algorithms demonstrate superior generalization compared to generic FL methods, their training is often slow due to reliance on first-order optimization, which leads to increased communication rounds while aiming for specific performance from the personalized models. Additionally, these methods often increase local computation by forwarding data during the search for personalized local updates or suffer from communication inefficiencies due to sending all the received local models from the server to each client. In this paper, to address the issues with existing PFL methods, we propose pFedSOP, a novel PFL algorithm that efficiently utilizes second-order optimization with personalized local gradient update. This technique aligns with the last category of PFL methods.

\begin{figure*}[ht]
    \centering
    \begin{minipage}{0.7\textwidth}
        \centering
        \includegraphics[width=\linewidth]{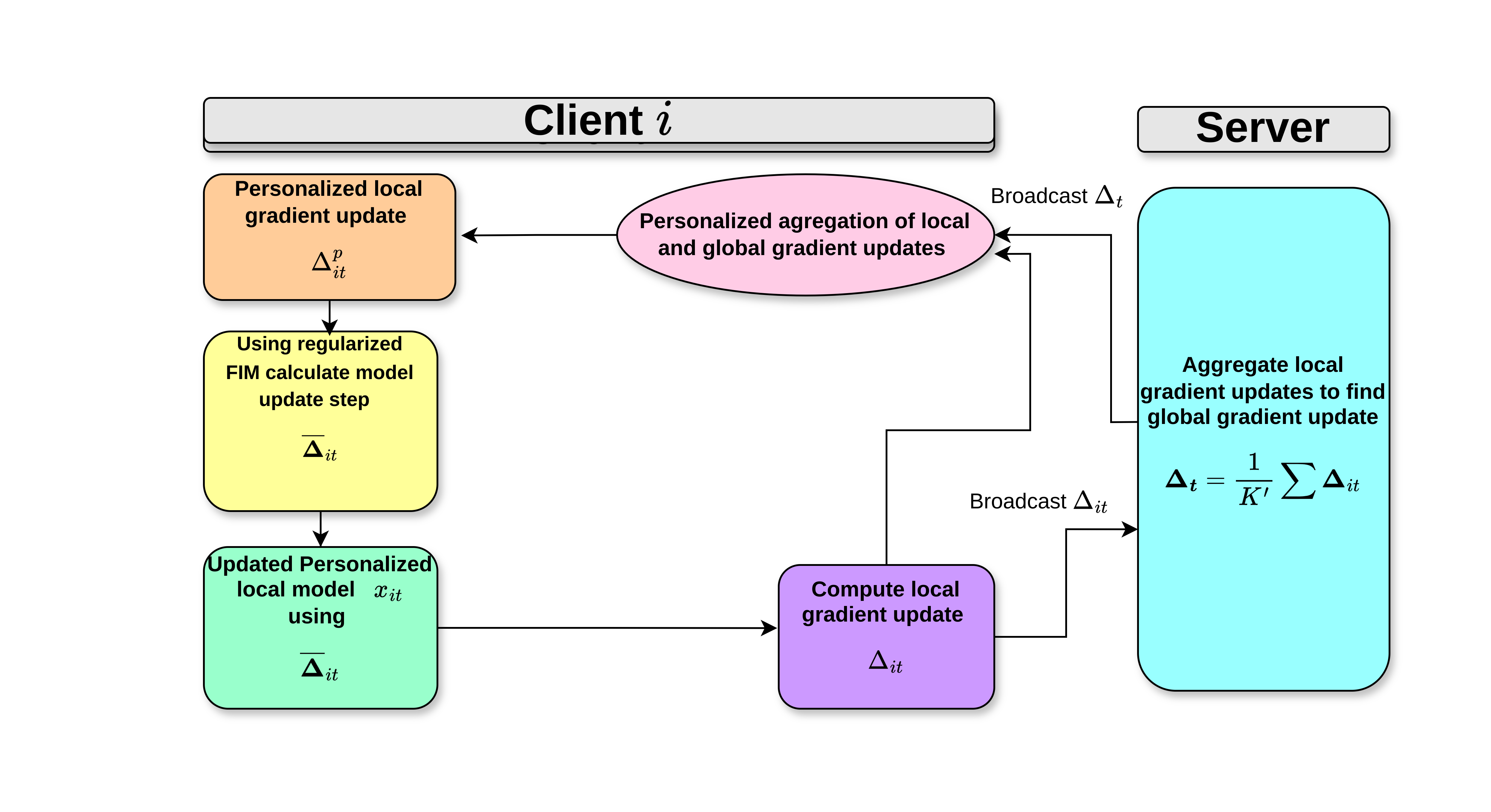}
    \end{minipage}
    \caption{Flow diagram of the proposed pFedSOP, illustrating the different operations performed at each client $i$ (personalized aggregation of local and global gradient updates, updating the personalized model, and computing the local gradient update), along with the server-side aggregation.}
    \label{fig:00flw}
\end{figure*}

\section{Preliminaries}
\label{sc1}
This section describes some preliminaries related to PFL and our designed algorithm.

\subsection{Problem Formulation of FL}

Generic FL is a distributed learning algorithm, which aims to collaboratively optimize a shared global model's parameters (\textbf{x}) by minimizing the average empirical risk objective $P(\textbf{x};\mathbb{D})$ computed across $K$ clients as shown in Eq. \ref{eq:3}.

\begin{equation}
\label{eq:3}
    \min_{\textbf{x}} P(\textbf{x};\mathbb{D})= \min_{\textbf{x}} \frac{1}{K}\sum^K_{i=1} P_i(\textbf{x};\mathbb{D}_i )
\end{equation}
where, $\mathbb{D}_i$ is the private dataset owned by client $i$, $\mathbb{D} = \bigcup_{i=1}^{K} \mathbb{D}_{i}$ , $P_i(\textbf{x};\mathbb{D}_i) = \frac{1}{|\mathbb{D}_i|}\sum_{\psi_j \in \mathbb{D}_i} f_j(\textbf{x}; \psi_j) $ is the empirical risk objective of client $i$ and $f_j(;)$ is the risk objective for the $j$-th sample of $\mathbb{D}_i$.

To solve the above problem described in Eq. \ref{eq:3}, the popular generic FL algorithm, called FedAvg \cite{McMahanaistats2017} allows individual clients to update the shared global model with their local dataset and local optimizer and the host server collects all the locally updated models $\{\textbf{x}_{i}\}^K_{i=1}$ and make a aggregation over these to find the updated global model as shown in Eq. \ref{eq:4} 

\begin{equation}
\label{eq:4}
    \textbf{x}= \frac{1}{K} \sum^K_{i=i} \textbf{x}_{i}.
\end{equation}

In personalized FL \cite{Tancorr2021,Fallahcorr2020,DinhTN20NeurIPS,zhang2023fedala,li2021ditto,zhang2023fedcr,liu2023feddwa,zhang2020personalized,xu2023personalized,collins2021exploiting}, the objective is to train a set of customized local models $\mathbb{X}= \{\textbf{x}_1, \textbf{x}_2,...,\textbf{x}_K\}$ by collaboratively optimize a set of empirical risk objectives $\{P_i(\textbf{x}_i;\mathbb{D}_i)\}^K_{i=1}$ computed across $K$ clients as shown in Eq. \ref{eq:5}

\begin{equation}
\label{eq:5}
    \min_{\mathbb{X}} P(\mathbb{X}; \mathbb{D})= \min_{\mathbb{X}} \frac{1}{K}\sum^K_{i=1} P_i(\textbf{x}_i;\mathbb{D}_i ).
\end{equation}
Optimization of the objective function mentioned in Eq. \ref{eq:5} leads to collaboratively optimize individual local model, which can perform well on its local data by taking relevant information from the global model while finding customized or personalized local models.

\subsection{Second-Order Optimization Using Fisher Information Matrix (FIM)}
The Fisher information matrix (FIM) $\textbf{F} \in \mathbb{R}^{d \times d}$ of an objective function $P(\textbf{x}; \mathbb{D})$ is defined as follows:
\begin{equation}
\label{eq:6}
    \textbf{F} = \mathbb{E}[\textbf{g}^i{\textbf{g}^i}^T] = \mathbb{E}_{\psi_i \in \mathbb{D}}[\nabla P(\textbf{x}; \psi_i) {\nabla P(\textbf{x}; \psi_i)}^T]
\end{equation}
where, $g^i = \nabla P(\textbf{x}; \psi_i)$ is the gradient of $P(\textbf{x}; \psi_i)$ at point \textbf{x} computed on the $i$-th sample $\psi_i \in \mathbb{D}$. It is proved that the FIM of negative log of a probabilistic objective (for example: Categorical cross entropy loss) is equivalent to the Hessian of that objective function \cite{martens2020new}. With this property, Natural Gradient Descent (NGD) \cite{zhang2019fast}, a variant of second-order optimization,  is proposed, which replaces the Hessian (\textbf{H}) with the FIM (\textbf{F}) in the update rule of second-order optimization (Eq. \ref{eq2}) for the probabilistic objective functions as shown in Eq. \ref{eq:7}. 

\begin{equation}
    \label{eq:7}
    \textbf{x}_{\tau} = \textbf{x}_{\tau-1} - \eta \textbf{F}_\tau^{-1} \textbf{g}_\tau
\end{equation}
Utilization of NGD with exact FIM is also a challenging task for large models and datasets, as it requires $O(Nd^2)$ computational cost for calculating FIM, similar to Eq. \ref{eq2}, which handles the exact Hessian. To effectively utilize NGD in PFL, we replace the true FIM with a regularized FIM, motivated by the paper of SOFIM \cite{sen2024sofim}. The regularized FIM is defined as follows :
\begin{equation}
    \label{eq:7a}
    \textbf{F} = \mathbb{E} [\textbf{g}^i {\textbf{g}^i}^T] \equiv \mathbb{E}[\textbf{g}^i] {\mathbb{E}[{\textbf{g}^i}]}^T + \rho \textbf{I}
\end{equation}
where, $\rho $ is regularization term, \textbf{I} $\in R^{d \times d}$ is Identity matrix and $\mathbb{E}[\textbf{g}^i]$ is the average gradient across all the samples $N$.

\subsection{Sherman Morrison Formula of Matrix Inversion}

The inverse of a matrix ${(\textbf{B} + \textbf{uv}^T)}^{-1}$ can be directly computed using Sherman Morrison formula of matrix inversion as shown in Eq. \ref{eq:7bb}.

\begin{equation}
    \label{eq:7bb}
    (\textbf{B} + \textbf{uv}^T)^{-1}= \textbf{B}^{-1} - \frac{\textbf{B}^{-1}\textbf{uv}^T \textbf{B}^{-1}}{1+\textbf{v}^T \textbf{B}^{-1}\textbf{u}}
\end{equation}
where, \textbf{B} $\in \mathbb{R}^{d \times d}$ is a invertible square matrix and \textbf{u}, \textbf{v} $\in \mathbb{R}^{d \times 1}$ are column vectors.

\section{Proposed method}
\label{sc4}
Our proposed pFedSOP is illustrated in Algorithm \ref{alg:algo3} and Figure \ref{fig:00flw}. In pFedSOP, at communication round $t$, each client $i$ first performs a personalized aggregation of the local gradient update $\boldsymbol{\Delta}_{i(t-1)}$ and the global gradient update $\boldsymbol{\Delta}_{(t-1)}$ from the previous communication round using the Gompertz function-based normalized angle between them. This is done to calculate the personalized local gradient update ${\boldsymbol{\Delta}^{p}_{it}}$, which captures relevant global information for each client. The client then uses a regularized FIM in second-order optimization with this personalized gradient update to update the personalized local model from $\textbf{x}_{i(t-1)}$ to $\textbf{x}_{it}$, as shown in Algorithm \ref{alg:algo1}. Since the new clients do not have a previous local gradient update, we randomly initialize their personalized models. Updating the personalized local model using the FIM constructed with the personalized local gradient update leads to faster training of the personalized models by efficiently handling the issues of the Hessian and its inverse in second-order optimization, while also avoiding the additional computation required for data feeding to the model during the personalization process. Once the personalized model is updated, the local gradient update $\boldsymbol{\Delta}_{it}$ at the point $\textbf{x}_{it}$ for this client $i$ is calculated using first-order optimization (SGD) on its local data, as shown in Algorithm \ref{alg:algo2}. The server then collects and aggregates the gradient updates from all clients to compute the global gradient update $\boldsymbol{\Delta}_{t}$ and broadcasts the global gradient update to all clients. To account for partial client participation in PFL (due to device label heterogeneity \cite{Limlsys2020}), pFedSOP randomly samples $K'$ clients from the total $K$ clients, with each client having an equal participation probability. During personalized aggregation in the partial client FL setting, pFedSOP considers the latest local gradient update from previous communication rounds as gradient update $\boldsymbol{\Delta}_{i(t-1)}$ for each client $i$. As our proposed pFedSOP uses FIM-based second-order optimization in PFL, this is limited to probabilistic objectives like the categorical cross-entropy loss function. All the steps associated with pFedSOP are summarized as follows:
\\
\textbf{step 1:} Each client $i$ performs a personalized aggregation of the local and global gradient updates and uses this aggregation to update its personalized local model through FIM-based second-order optimization.\\
\textbf{step 2:} Each client $i$ computes the local gradient update at the point of the updated personalized model using the SGD optimizer and local data.\\
\textbf{step 3:} The server collects and aggregates all the available local gradient updates, computes the global gradient update, and broadcasts it to all the available clients. \\

\begin{algorithm}[!h]
   \caption{Compute Personalized Model for Client $i$ at Communication Round $t$}
   \label{alg:algo1}
\begin{algorithmic}[1]
    \item \textbf{Input:} $\boldsymbol{\Delta}_{i(t-1)}$: Local gradient update, $\boldsymbol{\Delta}_{(t-1)}$: Global gradient update, $\textbf{x}_{i(t-1)}:$ Local model, $\eta_1$: Learning rate for personalization, $\rho$: Regularization parameter \newline
    \STATE Calculate cosine similarity between local and global gradient updates $(sim) =\frac{\boldsymbol{\Delta}_{i(t-1)} \cdot \boldsymbol{\Delta}_{(t-1)}}{||\boldsymbol{\Delta}_{i(t-1)}|| ||\boldsymbol{\Delta}_{(t-1)}||} \in [-1, 1]$
    \STATE Compute angle $\theta$ using inverse of cosine similarity ($\theta = arccosine(sim) \in [\pi, 0]$)
    \STATE Normalize this $\theta$ using Gompertz
function, $\beta = 1 - {e^{-e}}^{-\lambda(\theta - 1)} \in [0, 1]$, $\lambda > 0$
    \STATE Personalized aggregation of local and global gradient updates,  ${\boldsymbol{\Delta}^{p}_{it}} = (1-\beta) \boldsymbol{\Delta}_{i(t-1)} + \beta \boldsymbol{\Delta}_{(t-1)}$
    \STATE Calculate model update step $\overline{\boldsymbol{\Delta}}_{it}$  using Eq. \ref{eq:eq13} 
    \STATE Update local model $\textbf{x}_{it} = \textbf{x}_{i(t-1)} - \eta_1 \overline{\boldsymbol{\Delta}}_{it}$

\end{algorithmic}
\end{algorithm}

\begin{algorithm}[!h]
   \caption{Compute Local Gradient Update for Client $i$ at Communication Round $t$}
   \label{alg:algo2}
   
\begin{algorithmic}[1]
    \item \textbf{Input:} $\textbf{x}_{it}:$ Local personalized model, $\eta_2$: learning rate for local training, $\mathbb{D}_i$: Local dataset, $\mathcal{T}:$ Local iterations, $P_i:$ Local objective \newline
    \STATE $\textbf{x}^0_{it} =  
 \textbf{x}_{it}$
    \FOR{$\tau=1$ {\bfseries to} $\mathcal{T}$}
        \STATE Find stochastic gradient $ \textbf{g}^\tau_{it}= \frac{\partial P_i(\textbf{x}^{\tau-1}_{it}, \mathbb{D}_{i\tau})}{\partial \textbf{x}^{\tau-1}_{it}}$. where $\mathbb{D}_{i\tau} \subseteq \mathbb{D}_{i}$

        \STATE Update the model parameters $\textbf{x}^\tau_{it} \leftarrow \textbf{x}^{\tau-1}_{it} - \eta_2 \textbf{g}^\tau_{it}$
    \ENDFOR

    \STATE Find gradient update $\boldsymbol{\Delta}_{it} = \frac{\textbf{x}^0_{it} - \textbf{x}^{\mathcal{T}}_{it}}{\eta_2}$ 
\end{algorithmic}
\end{algorithm}

\subsection{Compute Local Gradient Update}
In pFedSOP, at each communication round  $t$, each client  $i$ first computes the personalized model ($\textbf{x}_{it}$) based on the similarity between the local and global gradient updates from the previous communication round. If the client is new, the personalized model for this client is randomly initialized. Once the personalized model $\textbf{x}_{it}$ is computed, client $i$ computes its local gradient update using Algorithm \ref{alg:algo2}, where it performs $\mathcal{T}$ iterations of SGD \cite{amari1993backpropagation} on the model initialized with the personalized model using its local data $\mathbb{D}_i$ , as shown in Eq. \ref{eq:eq8a}. The client then calculates the local gradient update $\boldsymbol{\Delta}_{it}$ using Eq. \ref{eq:eq8}, which is derived from the SGD update rule depicted in Eq. \ref{eq:eq9},

\begin{algorithm}[!h]
   \caption{pFedSOP}
   \label{alg:algo3}
\begin{algorithmic}[1]
    \item \textbf{Input:} $T$: Communication rounds, $\textbf{x}$: Randomly initialized model, $\eta_1$: learning rate for personalization, $\eta_2$: learning rate for local training, $\rho$: Regularization parameter,  \newline
    \item \textbf{Output:} Personalized local models $\{\textbf{x}_{it}\}^K_{i=1}$
    \FOR{$t=1$ {\bfseries to} $T$}
        \STATE Randomly pick a subset $\mathbb{C}$ of $K'$ clients from $K$ clients
         
        \STATE \underline{\textbf{In clients:}}\\
        \FOR{Client $i \in \mathbb{C}$   \textbf{in parallel} }
            \IF{ Client $i$ is new}
                \STATE  Initialize personalized model $\textbf{x}_{it} \leftarrow \textbf{x}$
            \ELSE 
                \STATE Compute personalized gradient update using $\boldsymbol{\Delta}_{i(t-1)}$ $\&$ $\boldsymbol{\Delta}_{(t-1)}$ and find updated personalized model $\textbf{x}_{it}$ as depicted in Algorithm \ref{alg:algo1}
            \ENDIF
            \STATE Calculate local gradient update $\boldsymbol{\Delta}_{it}$ at point $\textbf{x}_{it}$ using local data $\mathbb{D}_i$ and SGD optimizer as depicted in Algorithm \ref{alg:algo2}

        \ENDFOR
        \STATE \underline{\textbf{In server:}}\\
        \STATE Collect $\{\boldsymbol{\Delta}_{it}\}$ from all the available clients
        \STATE Aggregate all the local gradient updates to find the global gradient update $\boldsymbol{\Delta}_t = \frac{1}{K'}\sum \boldsymbol{\Delta}_{it}$ 
        
        \STATE Broadcast  $\boldsymbol{\Delta}_t$ to the clients available for next communication round
    \ENDFOR
\end{algorithmic}
\end{algorithm}

\begin{equation}
\label{eq:eq8a}
    \textbf{x}^\tau_{it} = \textbf{x}^{\tau-1}_{it} - \eta_2 \textbf{g}^\tau_{it}
\end{equation}

\begin{equation}
\label{eq:eq8}
    \boldsymbol{\Delta}_{it} = \frac{\textbf{x}^0_{it} - \textbf{x}^{\mathcal{T}}_{it}}{\eta_2}
\end{equation}
where, $\textbf{x}^0_{it} \leftarrow \textbf{x}_{it}$, $\textbf{g}^\tau_{it}= \frac{\partial P_i(\textbf{x}^{\tau-1}_{it}, \mathbb{D}_{i\tau})}{\partial \textbf{x}^{\tau-1}_{it}}$ is the stochastic gradient computed at $\tau$-th local iteration $(\mathbb{D}_{i\tau} \subseteq \mathbb{D}_{i})$, $\eta_2$ is the learning rate for local training. 

\begin{equation}
\label{eq:eq9}
    \textbf{x}^\mathcal{T}_{it} =\textbf{x}^0_{it} - \eta_2 \Delta_{it}
\end{equation}

The local gradient update calculated using Eq. \ref{eq:eq8} represents the sum of the gradient vectors across all the SGD iterations, meaning that $\Delta_{it} = \sum^{\mathcal{T}}_{\tau = 1} \textbf{g}^\tau_{it}$ is the vector representing the resultant gradient update of the set of gradient update vectors computed over $\mathcal{T}$ SGD iterations. After computing the local gradient update, client $i$ broadcasts it to the server.

\subsection{Calculate Global Gradient Update}
After receiving local gradient updates $\{ \Delta_{it}\}$ from all the available $K'$ clients, pFedSOP aggregates them  and finds the global gradient update $\Delta_t$ as shown in Eq. \ref{eq:eq10}. 

\begin{equation}
\label{eq:eq10}
    \Delta_t = \frac{1}{K'} \sum^{K'}_{j=1} \Delta_{jt}
\end{equation}

\subsection{Compute Personalized Local Gradient  Update}

At the beginning of each communication round $t > 1$, each client $i$ receives the global gradient update $\boldsymbol{\Delta}_{t-1}$ from the server and make a personalized aggregation of this global gradient with  its local gradient update $\boldsymbol{\Delta}_{i(t-1)}$ to calculate personalized local gradient update $\boldsymbol{\Delta}^p_{it}$ and uses this personalized gradient update to update the personalized local model from $\textbf{x}_{i(t-1)}$ to $\textbf{x}_{it}$ using FIM-based second order optimization as shown in Algorithm \ref{alg:algo1}. To do personalized aggregation, pFedSOP finds cosine similarity $ (sim) =\frac{\boldsymbol{\Delta}_{i(t-1)} \cdot \boldsymbol{\Delta}_{(t-1)}}{||\boldsymbol{\Delta}_{i(t-1)}|| ||\boldsymbol{\Delta}_{(t-1)}||} \in [-1, 1]$ between local and the global gradient updates of previous communication round and then calculate the angle $\theta  \in [\pi, 0]$ between them using inverse of the cosine similarity ( $\theta = {cos}^{-1} sim$). Once the angle between the local and global gradient updates is computed, it is then normalized with the Gompertz function \cite{gibbs2000variational} as shown in Eq. \ref{eq:eq1v1}.

\begin{equation}
\label{eq:eq1v1}
    \beta = 1 - {e^{-e}}^{-\lambda(\theta - 1)}
\end{equation}
where, $\beta \in [0, 1]$ is the normalized angle, which is used as an aggregation weight of $\boldsymbol{\Delta}_{i(t-1)}$ $\&$ $\boldsymbol{\Delta}_{(t-1)}$ and $\lambda > 0$ is a hyperparameter, which controls the steepness of the Gompertz function. After finding aggregation weight $\beta$, pFedSOP calculates the personalized local gradient update by performing the weighted average of $\boldsymbol{\Delta}_{i(t-1)}$ $\&$ $\boldsymbol{\Delta}_{(t-1)}$ as shown in Eq. \ref{eq:eq11}, which helps to adapt relavant information from the global gradient update into the local gradient update based on the similarity between them. In case of partial client participation in PFL, pFedSOP considers the latest local gradient update from previous communication rounds as gradient update $\boldsymbol{\Delta}_{i(t-1)}$ for each client $i$. 

\begin{equation}
    \label{eq:eq11}
    {\boldsymbol{\Delta}^p_{it}} = (1-\beta) \boldsymbol{\Delta}_{i(t-1)} + \beta \boldsymbol{\Delta}_{(t-1)}
\end{equation}

As $\boldsymbol{\Delta}_{it} = \sum^{\mathcal{T}}_{\tau = 1} \textbf{g}^\tau_{it}$ is the cumulative gradient over all the local iterations, we can write that,
\begin{equation}
    \label{eq:eq11a}
    \begin{aligned}
        {\boldsymbol{\Delta}^p_{it}} &= (1-\beta) \sum_{\tau = 1}^{\mathcal{T}} \textbf{g}^\tau_{it} + \beta \boldsymbol{\Delta}_{(t-1)} \\
        &=  \sum_{\tau = 1}^{\mathcal{T}} \left((1- \beta) \textbf{g}^\tau_{it} +  \frac{\beta \boldsymbol{\boldsymbol{\Delta}_{(t-1)}}}{\mathcal{T}} \right)\\
        &=  \sum_{\tau = 1}^{\mathcal{T}} {\textbf{g}_{it}^{p\tau}}
    \end{aligned}
\end{equation}
where, ${\textbf{g}_{it}^{p\tau}} = (1- \beta) \textbf{g}^\tau_{it} +  \frac{\beta \boldsymbol{\boldsymbol{\Delta}_{(t-1)}}}{\mathcal{T}}$ is the personalized gradient for $\tau$-th local iteration. With this, we can say that ${\boldsymbol{\Delta}^p_{it}}$ is the resultant personalized gradient over $\mathcal{T}$ local iterations for client $i$.

\subsection{Update the Personalized local Model}

Once the personalized local gradient update i.e. 
resultant personalized gradient ($\boldsymbol{\Delta}^p_{it}$) is calculated in client $i$, this is then used for updating the personalized local model $\textbf{x}_{i(t-1)}$ using Eq. \ref{eq:7}, where the FIM ($\textbf{F}_{it}$) is replaced by a regularized FIM to overcome the issue of high computation cost associated with exact FIM.  The regularized FIM with resultant personalized gradient is shown in Eq. \ref{eq:eq12}. 

\begin{equation}
    \label{eq:eq12}
    \textbf{F}_i \equiv [{\boldsymbol{\Delta}^p_{it}} {{\boldsymbol{\Delta}^p_{it}}}^T + \rho \textbf{I}]
\end{equation}

pFedSOP utilizes this regularized FIM as Hessian curvature and directly finds the update step ($\overline{\boldsymbol{\Delta}}_{it}$) for the personalized local model $\textbf{x}_{i(t-1)}$ using Sherman Morrison formula of matrix inversion by replacing $\textbf{u}=\textbf{v}={\boldsymbol{\Delta}^p_{it}}$ and $\textbf{B} = \rho \textbf{I}$ in Eq. \ref{eq:7bb} as shown below. 
\begin{equation}
    \label{eq:eq13}
    \begin{aligned}
        {\overline{\boldsymbol{\Delta}}_{it} } &= {\textbf{F}_{it}}^{-1} {\boldsymbol{\Delta}^p_{it}} \\
        &= {[{\boldsymbol{\Delta}^p_{it}} {{\boldsymbol{\Delta}^p_{it}}}^T + \rho \textbf{I}]}^{-1} {\boldsymbol{\Delta}^p_{it}} \\
        &=  \left[{(\rho \textbf{I})}^{-1} - \frac{{(\rho \textbf{I})}^{-1} {\boldsymbol{\Delta}^p_{it}} 
 {\boldsymbol{\Delta}^p_{it}}^T {(\rho \textbf{I})}^{-1}}{1+{\boldsymbol{\Delta}^p_{it}} ^T {(\rho \textbf{I})}^{-1}{\boldsymbol{\Delta}^p_{it}} } \right] {\boldsymbol{\Delta}^p_{it}} \\
        &=  \frac{{\boldsymbol{\Delta}^p_{it}}}{\rho} - \frac{ {\boldsymbol{\Delta}^p_{it}} 
 {\boldsymbol{\Delta}^p_{it}}^T {\boldsymbol{\Delta}^p_{it}} }{\rho^2+\rho {\boldsymbol{\Delta}^p_{it}}^T {\boldsymbol{\Delta}^p_{it}} }  
    \end{aligned}
\end{equation}

Once the update step ($\overline{\boldsymbol{\Delta}}_{it} $) is computed, this is then used for updating the local model $\textbf{x}_{i(t-1)}$ using Eq. \ref{eq:7} as shown below,

\begin{equation}
    \label{eq:eq14}
    \textbf{x}_{it} = \textbf{x}_{i(t-1)} - \eta_1 \overline{\boldsymbol{\Delta}}_{it}
\end{equation}
here, $\eta_1$ is the learning rate for updating the personalized local model. Using this regularized FIM, created with the personalized local gradient update, while updating the personalized model leads to faster training by efficiently utilizing second-order optimization without exposing high computation and storage costs. This, in turn, helps reduce the number of communication rounds required to achieve the desired level of performance from the personalized local models. Since pFedSOP uses FIM-based second-order optimization in PFL, it is limited to probabilistic local objectives, such as the categorical cross-entropy loss function for image classification.
\subsection{Convergence analysis}
This section examines the convergence of personalized local models trained using pFedSOP in PFL. The analysis demonstrates a linear-quadratic convergence guarantee for the local models under some proper choice of FIM regularization parameter $\rho$.

\begin{assumption}
\label{assump1}

The local objective function $P_i$ is  twice differentiable with respect to the local model parameters $\textbf{x}_{it}$.

\end{assumption}

\begin{assumption}
\label{assump2}

The Hessian $\textbf{H} = \nabla^2 P_i (\textbf{x}_{it})$ of the local objective function $P_i$ is L -Lipschitz continuous, that is \\
\[ ||\nabla^2 P_i (\textbf{x}_{it}) - \nabla^2 P_i (\textbf{x}_{i(t-1)})|| \leq L || \textbf{x}_{it} - \textbf{x}_{i(t-1)}||\].

\end{assumption}

\begin{lemma}
\label{lemma2}
$||\nabla^2 P_i (\textbf{x}^*_{i}) (\textbf{x}_{it} - \textbf{x}^*_{i})|| \leq \Gamma ||\textbf{x}_{it} - \textbf{x}^*_{i}||$, where $\Gamma$ is the largest eigenvalue of the Hessian  $\nabla^2 P_i(\textbf{x}^*_{i})$ at the optimum local model $\textbf{x}^*_{i}$.

\end{lemma}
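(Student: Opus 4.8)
The plan is to recognize this statement as a direct consequence of the standard relationship between the induced matrix $2$-norm of a symmetric matrix and its eigenvalues. First I would invoke Assumption \ref{assump1}: since $P_i$ is twice differentiable with respect to $\textbf{x}_{it}$, the mixed second-order partial derivatives are equal (Schwarz's theorem), so the Hessian $\nabla^2 P_i(\textbf{x}^*_i)$ is a real \emph{symmetric} matrix. This symmetry is the structural fact that makes the eigenvalue bound applicable, and it should be stated explicitly at the outset.

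Next I would apply the compatibility (submultiplicative) property of the induced matrix $2$-norm: for any matrix $\textbf{A} \in \mathbb{R}^{d \times d}$ and any vector $\textbf{v} \in \mathbb{R}^{d}$ one has $||\textbf{A}\textbf{v}|| \leq ||\textbf{A}||_2\, ||\textbf{v}||$, where $||\textbf{A}||_2$ denotes the spectral (operator) norm. Setting $\textbf{A} = \nabla^2 P_i(\textbf{x}^*_i)$ and $\textbf{v} = \textbf{x}_{it} - \textbf{x}^*_i$ yields immediately the inequality $||\nabla^2 P_i(\textbf{x}^*_i)(\textbf{x}_{it} - \textbf{x}^*_i)|| \leq ||\nabla^2 P_i(\textbf{x}^*_i)||_2\, ||\textbf{x}_{it} - \textbf{x}^*_i||$, which is the desired bound up to identifying the norm factor with $\Gamma$.

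The final step is to identify $||\nabla^2 P_i(\textbf{x}^*_i)||_2$ with $\Gamma$. By the spectral theorem, a real symmetric matrix admits an orthonormal eigenbasis, and its spectral norm equals the largest eigenvalue in absolute value. Because $\textbf{x}^*_i$ is the optimal (minimizing) local model, the Hessian there is positive semidefinite, so all its eigenvalues are non-negative; the largest eigenvalue in magnitude therefore coincides with the largest eigenvalue $\Gamma$. Chaining this identification with the inequality from the previous paragraph completes the argument.

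As for the main obstacle: this lemma is essentially a textbook fact, so there is no substantive difficulty to overcome. The only point requiring care is the justification that the spectral norm equals $\Gamma$ and not merely $\max_j |\lambda_j|$, which hinges on the positive semidefiniteness of the Hessian at the optimum $\textbf{x}^*_i$. Should one prefer not to assume that $\textbf{x}^*_i$ is a genuine local minimizer, then $\Gamma$ should simply be read as the largest eigenvalue in magnitude, and every step of the argument goes through verbatim.
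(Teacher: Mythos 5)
Your proposal takes essentially the same approach as the paper: apply the compatibility inequality $||\textbf{A}\textbf{v}|| \leq ||\textbf{A}||\,||\textbf{v}||$ with $\textbf{A} = \nabla^2 P_i(\textbf{x}^*_i)$ and $\textbf{v} = \textbf{x}_{it} - \textbf{x}^*_i$, then identify the matrix norm with $\Gamma$. The only difference is that you explicitly justify the identification $||\nabla^2 P_i(\textbf{x}^*_i)|| = \Gamma$ (symmetry via Schwarz's theorem, the spectral theorem, and positive semidefiniteness at the optimum), whereas the paper simply asserts it.
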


\begin{proof}
If $\Gamma$ is the largest eigenvalue value of $\nabla^2 F(\textbf{x}^*_{i})$, we can write, $||\nabla^2 P_i (\textbf{x}^*_{i})|| = \Gamma$. Now, using the relationship between the matrix norm and vector norm, we can write that:

\[||\nabla^2 P_i (\textbf{x}^*_{i}) (\textbf{x}_{it} - \textbf{x}^*_{i})|| \leq ||\nabla^2 P_i (\textbf{x}^*_{i})|| . ||\textbf{x}_{it} - \textbf{x}^*_{i}||\]

Putting $\Gamma$ in place of $\nabla^2 P_i (\textbf{x}^*_{i})$, we get:

\[||\nabla^2 F(\textbf{x}^*_{i}) (\textbf{x}_{it} - \textbf{x}^*_{i})|| \leq \Gamma ||\textbf{x}_{it} - \textbf{x}^*_{i}||\]

\end{proof}

\begin{theorem}
\label{theorem__1}
If Assumptions \ref{assump1} and \ref{assump2} hold, then each local model $\textbf{x}_{it}$ can achieve a linear-quadratic convergence guarantee, given as follows :
\[ ||\textbf{x}_{it} - \textbf{x}^*_{i}|| \leq \varepsilon_1 ||\textbf{x}_{i(t-1)} - \textbf{x}^*_{i}|| + \varepsilon_2 {||\textbf{x}_{i(t-1)} - \textbf{x}^*_{i}||}^2 \]
\end{theorem}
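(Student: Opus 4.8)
The plan is to establish an error recursion for the personalized update $\textbf{x}_{it} = \textbf{x}_{i(t-1)} - \eta_1 \textbf{F}_{it}^{-1}\boldsymbol{\Delta}^p_{it}$ from Eq.~\ref{eq:eq14}, then separate it into a linearly-contracting piece and a quadratically-vanishing piece by comparing the regularized FIM against the true Hessian at the optimum. First I would subtract the optimal local model $\textbf{x}^*_i$ from both sides and factor out the inverse FIM, writing
\[ \textbf{x}_{it} - \textbf{x}^*_i = \textbf{F}_{it}^{-1}\bigl[\textbf{F}_{it}(\textbf{x}_{i(t-1)} - \textbf{x}^*_i) - \eta_1 \boldsymbol{\Delta}^p_{it}\bigr]. \]
Since $\textbf{x}^*_i$ is a stationary point of the local objective, $\nabla P_i(\textbf{x}^*_i) = 0$, so treating $\boldsymbol{\Delta}^p_{it}$ as the gradient direction at $\textbf{x}_{i(t-1)}$ and invoking Assumption~\ref{assump1}, I would use the integral form of Taylor's theorem to write $\boldsymbol{\Delta}^p_{it} = \int_0^1 \nabla^2 P_i\bigl(\textbf{x}^*_i + s(\textbf{x}_{i(t-1)} - \textbf{x}^*_i)\bigr)\,ds\,(\textbf{x}_{i(t-1)} - \textbf{x}^*_i)$.

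Next I would add and subtract $\eta_1 \nabla^2 P_i(\textbf{x}^*_i)(\textbf{x}_{i(t-1)} - \textbf{x}^*_i)$ inside the bracket, so that it splits as
\[ \bigl[\textbf{F}_{it} - \eta_1\nabla^2 P_i(\textbf{x}^*_i)\bigr](\textbf{x}_{i(t-1)}-\textbf{x}^*_i) + \eta_1\bigl[\nabla^2 P_i(\textbf{x}^*_i)(\textbf{x}_{i(t-1)}-\textbf{x}^*_i) - \boldsymbol{\Delta}^p_{it}\bigr]. \]
Taking norms and using submultiplicativity, the first group yields the linear term with $\varepsilon_1 = \|\textbf{F}_{it}^{-1}\|\,\|\textbf{F}_{it} - \eta_1\nabla^2 P_i(\textbf{x}^*_i)\|$, while for the second group I would substitute the integral representation of $\boldsymbol{\Delta}^p_{it}$ and invoke the $L$-Lipschitz Hessian of Assumption~\ref{assump2} to bound $\|\nabla^2 P_i(\textbf{x}^*_i)(\textbf{x}_{i(t-1)}-\textbf{x}^*_i) - \boldsymbol{\Delta}^p_{it}\| \le \tfrac{L}{2}\|\textbf{x}_{i(t-1)}-\textbf{x}^*_i\|^2$, giving the quadratic term with $\varepsilon_2 = \tfrac{\eta_1 L}{2}\|\textbf{F}_{it}^{-1}\|$.

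To make both constants explicit I would bound $\|\textbf{F}_{it}^{-1}\|$ using the spectral structure of the regularized FIM: since $\textbf{F}_{it} = \boldsymbol{\Delta}^p_{it}{\boldsymbol{\Delta}^p_{it}}^T + \rho\textbf{I}$ is the sum of a rank-one positive semidefinite matrix and $\rho\textbf{I}$, its smallest eigenvalue is exactly $\rho$, so $\|\textbf{F}_{it}^{-1}\| = 1/\rho$. This is where the proper choice of $\rho$ enters, since $\rho$ must be large enough that $\varepsilon_1 < 1$ yet small enough not to destroy the quadratic acceleration. I expect the main obstacle to be the bound on $\|\textbf{F}_{it} - \eta_1\nabla^2 P_i(\textbf{x}^*_i)\|$, because this requires controlling how well the rank-one regularized FIM built from the personalized gradient update approximates the genuine Hessian at the optimum; Lemma~\ref{lemma2} supplies the eigenvalue bound $\Gamma$ on $\nabla^2 P_i(\textbf{x}^*_i)$ that keeps this term finite, but tying the personalized-aggregated gradient direction $\boldsymbol{\Delta}^p_{it}$ back to the true local gradient is the delicate step that ultimately decides whether $\varepsilon_1$ can be driven below one.
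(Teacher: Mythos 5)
Your route is sound and genuinely different from the paper's. The paper works with the error recursion directly: it expands $\textbf{F}_{it}^{-1}=\alpha_1\textbf{I}-\alpha_2\boldsymbol{\Delta}^p_{it}{\boldsymbol{\Delta}^p_{it}}^T$ via Sherman--Morrison, invokes the idealization that an exact Newton step with gradient $\boldsymbol{\Delta}^p_{it}$ and the true Hessian lands exactly at $\textbf{x}^*_i$ (so $\boldsymbol{\Delta}^p_{it}=\nabla^2 P_i(\textbf{x}_{i(t-1)})\,\textbf{e}_{t-1}$), then adds and subtracts $\nabla^2 P_i(\textbf{x}^*_i)$ and applies the triangle inequality, Assumption \ref{assump2}, and Lemma \ref{lemma2} to get $||\textbf{e}_t||\leq(1+\Gamma||\textbf{G}||)\,||\textbf{e}_{t-1}||+L||\textbf{G}||\,||\textbf{e}_{t-1}||^2$ with $||\textbf{G}||\leq\eta_1\alpha_1+\eta_1\alpha_2||\boldsymbol{\Delta}^p_{it}||^2$. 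You instead factor out $\textbf{F}_{it}^{-1}$, treat $\boldsymbol{\Delta}^p_{it}$ as the true gradient $\nabla P_i(\textbf{x}_{i(t-1)})$, and run the classical Newton local-convergence argument (Taylor integral form plus Lipschitz Hessian), splitting the error into a preconditioner-mismatch term and a Taylor remainder. Note that both arguments rest on an unstated idealization tying $\boldsymbol{\Delta}^p_{it}$ to the true local derivatives --- yours ($\boldsymbol{\Delta}^p_{it}=\nabla P_i(\textbf{x}_{i(t-1)})$) is the more natural one, and comparable in strength to the paper's --- so neither is fully rigorous about the fact that $\boldsymbol{\Delta}^p_{it}$ is really an aggregation of multi-step stochastic local gradients with the global update. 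What your approach buys: a sharper quadratic constant $\varepsilon_2=\frac{\eta_1 L}{2\rho}$ (the paper's is roughly $\frac{2L\eta_1}{\rho}$ in the worst case), and it isolates the conceptually right quantity $||\textbf{F}_{it}-\eta_1\nabla^2 P_i(\textbf{x}^*_i)||$, which is the only place a genuine contraction $\varepsilon_1<1$ could come from. What it costs: that mismatch term cannot be driven below $1$ without additional assumptions on how well the rank-one FIM approximates the scaled Hessian, and the step you flag as the main obstacle closes only crudely --- by the triangle inequality, $||\textbf{F}_{it}-\eta_1\nabla^2 P_i(\textbf{x}^*_i)||\leq\rho+||\boldsymbol{\Delta}^p_{it}||^2+\eta_1\Gamma$, giving $\varepsilon_1\leq 1+\frac{||\boldsymbol{\Delta}^p_{it}||^2+\eta_1\Gamma}{\rho}$, which is greater than $1$ just like the paper's $\varepsilon_1$, and in fact looser than the paper's constants when $||\boldsymbol{\Delta}^p_{it}||$ is large (the paper's extra terms carry a factor $\Gamma\eta_1$, yours does not). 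So you should not expect to drive $\varepsilon_1$ below one here; finish the proof with that triangle-inequality bound and you obtain a linear-quadratic guarantee of exactly the theorem's form, with different but qualitatively equivalent constants (both decreasing in $\rho$, consistent with the paper's analysis of the role of $\rho$).
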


Where, \[ {\varepsilon_1} = 1 + \frac{\Gamma \eta_1 }{\rho} + \frac{\Gamma \eta_1 {||\boldsymbol{\Delta}^p_{it} ||}^2}{\rho^2 + \rho {\boldsymbol{\Delta}^p_{it}}^T \boldsymbol{\Delta}^p_{it}}\]  \[{\varepsilon_2} = \frac{L \eta_1}{\rho} + \frac{L \eta_1 {||\boldsymbol{\Delta}^p_{it} ||}^2}{ \rho^2 + \rho {\boldsymbol{\Delta}^p_{it}}^T \boldsymbol{\Delta}^p_{it}}\].

\begin{proof}
From the update rule (Eq. \ref{eq:eq14}) of the personalized local model ($\textbf{x}_{it}$) of client $i$ in pFedSOP, we write that \\

\begin{equation}
\label{conv_eq:1}
    \textbf{x}_{it} = \textbf{x}_{i(t-1)} - \eta_1 \overline{\boldsymbol{\Delta}}_{it} 
\end{equation}

Where, ${\overline{\boldsymbol{\Delta}}_{it} } = {\textbf{F}_{it}}^{-1} {\boldsymbol{\Delta}^p_{it}}$ is the update step of the personalized local model, ${\textbf{F}_{it}} = [{\boldsymbol{\Delta}^p_{it}} {{\boldsymbol{\Delta}^p_{it}}}^T + \rho \textbf{I}]$ is the approximated FIM and ${\boldsymbol{\Delta}^p_{it}}$ is personalized local gradient. Using Sherman Morrison formula of matrix inversion, we can express ${\textbf{F}_{it}}^{-1}$ as follows:
\[{\textbf{F}_{it}}^{-1} = \frac{\textbf{I}}{\rho} - \frac{{\boldsymbol{\Delta}^p_{it}} {{\boldsymbol{\Delta}^p_{it}}}^T}{\rho^2 + \rho {{\boldsymbol{\Delta}^p_{it}}}^T {\boldsymbol{\Delta}^p_{it}}} \]
Suppose $\alpha_{1} = \frac{1}{\rho}$ and $\alpha_{2} = \frac{1}{\rho^2 + \rho {\boldsymbol{\Delta}^p_{it}}^T \boldsymbol{\Delta}^p_{it}}$. Using these in the above expression, we can write that

\[{\textbf{F}_{it}}^{-1} = \alpha_{1} \textbf{I} - \alpha_{2} \boldsymbol{\Delta}^p_{it} {\boldsymbol{\Delta}^p_{it}}^T. \]
Using the above expression of ${\textbf{F}_{it}}^{-1}$  and subtracting $\textbf{x}^*_{i}$ from both side in Eq. \ref{conv_eq:1}, we get:

\begin{equation}
\label{conv_eq:2}
    \textbf{x}_{it} - \textbf{x}^*_{i} = \textbf{x}_{i(t-1)} -\textbf{x}^*_{i} - \eta_1 (\alpha_{1} \textbf{I} - \alpha_{2} \boldsymbol{\Delta}^p_{it} {\boldsymbol{\Delta}^p_{it}}^T) \boldsymbol{\Delta}^p_{it} 
\end{equation}
Suppose $\textbf{e}_t = \textbf{x}_{it} -\textbf{x}^*_{i}$ and $\textbf{e}_{t-1} = \textbf{x}_{i(t-1)} -\textbf{x}^*_{i}$ are the errors at $t$ - th and $(t-1)$ - th communication rounds for client $i$. Putting these terms in  Eq. \ref{conv_eq:2}, we get:

\begin{equation}
\label{conv_eq:3}
    \textbf{e}_t = \textbf{e}_{t-1} - \eta_1 (\alpha_{1} \textbf{I} - \alpha_{2} \boldsymbol{\Delta}^p_{it} {\boldsymbol{\Delta}^p_{it}}^T) \boldsymbol{\Delta}^p_{it} 
\end{equation}
From the update rule of second-order optimization with exact Hessian, we can write that: 

\begin{equation}
\label{conv_eq:3prev}
    \textbf{x}^*_{i} = \textbf{x}_{i(t-1)} - {(\nabla^2 P_i(\textbf{x}_{i(t-1)}))}^{-1} \boldsymbol{\Delta}^p_{it} 
\end{equation}
here, $\nabla^2 P_i(\textbf{x}_{i(t-1)})$ is the true Hessian of the local objective calculated at $\textbf{x}_{i(t-1)}$.
From the Eq. \ref{conv_eq:3prev}, we get, 
\[\textbf{x}_{i(t-1)} - \textbf{x}^*_{i}  = {(\nabla^2 P_i(\textbf{x}_{i(t-1)}))}^{-1} \boldsymbol{\Delta}^p_{it}  \]
\[\boldsymbol{\Delta}^p_{it} = (\nabla^2 P_i(\textbf{x}_{i(t-1)})) \textbf{e}_{t-1}\]
Replacing the above expression of $\boldsymbol{\Delta}^p_{it}$ in Eq. \ref{conv_eq:3}, we get:
\[\textbf{e}_t = \textbf{e}_{t-1} - \eta_1 (\alpha_{1} \textbf{I} - \alpha_{2} \boldsymbol{\Delta}^p_{it} {\boldsymbol{\Delta}^p_{it}}^T) (\nabla^2 P_i(\textbf{x}_{i(t-1)})) \textbf{e}_{t-1}  \]
Let’s say $\textbf{G} = \eta_1 (\alpha_{1} \textbf{I} - \alpha_{2} \boldsymbol{\Delta}^p_{it} {\boldsymbol{\Delta}^p_{it}}^T)$. Then, we can write:
\[\textbf{e}_t = \textbf{e}_{t-1} - \textbf{G} \left(\nabla^2 P_i(\textbf{x}_{i(t-1)}) - \nabla^2 P_i (\textbf{x}^*_{i}) + \nabla^2 P_i (\textbf{x}^*_{i})\right)  \textbf{e}_{t-1}
    \].
\begin{equation}
\label{conv_eq:4}
\begin{gathered}
    \textbf{e}_t = \textbf{e}_{t-1} - \textbf{G} \left(\nabla^2 P_i (\textbf{x}_{i(t-1)}) - \nabla^2 P_i(\textbf{x}^*_{i})\right)  \textbf{e}_{t-1}\\ 
    - \textbf{G} \nabla^2 P_i(\textbf{x}^{*}_{i}) \textbf{e}_{t-1} 
\end{gathered}
\end{equation}
Taking norm both sides of Eq. \ref{conv_eq:4},

\begin{equation}
\label{conv_eq:5}
\begin{gathered}
    ||\textbf{e}_t|| = ||\textbf{e}_{t-1} - \textbf{G} \left(\nabla^2 P_i (\textbf{x}_{i(t-1)}) - \nabla^2 P_i(\textbf{x}^*_{i})\right)  \textbf{e}_{t-1}\\ 
    - \textbf{G} \nabla^2 P_i(\textbf{x}^{*}_{i}) \textbf{e}_{t-1}|| 
\end{gathered}
\end{equation}
 
\begin{equation}
\label{conv_eq:6}
\begin{gathered}
    ||\textbf{e}_t|| \leq ||\textbf{e}_{t-1}|| + ||\textbf{G} \left(\nabla^2 P_i (\textbf{x}_{i(t-1)}) - \nabla^2 P_i(\textbf{x}^*_{i})\right)  \textbf{e}_{t-1}\\ 
    + \textbf{G} \nabla^2 P_i(\textbf{x}^{*}_{i}) \textbf{e}_{t-1}|| 
\end{gathered}
\end{equation}
If assumption \ref{assump2} holds, we can write that:
\begin{equation}
\label{conv_eq:7}
\begin{gathered}
    ||\textbf{e}_t|| \leq ||\textbf{e}_{t-1}|| + ||\textbf{G} L  {\textbf{e}_{t-1}} {\textbf{e}_{t-1}}^T||  + ||\textbf{G} \nabla^2 P_i(\textbf{x}^{*}_{i}) \textbf{e}_{t-1}|| 
\end{gathered}
\end{equation}
Utilizing Lemma \ref{lemma2}, we can rewrite Eq. \ref{conv_eq:7} as shown below :

\[||\textbf{e}_t|| \leq ||\textbf{e}_{t-1}|| + ||\textbf{G} L  {\textbf{e}_{t-1}} {\textbf{e}_{t-1}}^T|| + ||\textbf{G} \Gamma \textbf{e}_{t-1}|| \]

\begin{equation}
\label{conv_eq:8}
\begin{gathered}
    ||\textbf{e}_t|| \leq \underline{\left(1 + \Gamma ||\textbf{G}||\right)} ||\textbf{e}_{t-1}|| + \underline{\left(L ||\textbf{G}||\right)} {||\textbf{e}_{t-1}||}^2 
\end{gathered}
\end{equation}

From $||\textbf{G}||$, we find that:
\begin{equation}
\label{conv_eq:9}
  \begin{aligned}
    ||\textbf{G}|| & =||\eta_1 (\alpha_{1} \textbf{I} - \alpha_{2} \boldsymbol{\Delta}^p_{it} {\boldsymbol{\Delta}^p_{it}}^T)||\\
      & \leq ||\eta_1 \alpha_{1} \textbf{I}|| + ||\eta_1 \alpha_{2} \boldsymbol{\Delta}^p_{it} {\boldsymbol{\Delta}^p_{it}}^T||\\
      & \leq \eta_1 \alpha_{1} + \eta_1 \alpha_2  {||\boldsymbol{\Delta}^p_{it}||}^2
  \end{aligned}
\end{equation}
Using Eq. \ref{conv_eq:9} and Eq. \ref{conv_eq:8}, we can write that:

\begin{equation}
\label{conv_eq:10}
\begin{gathered}
    ||\textbf{e}_t|| \leq (1 + \Gamma \eta_1 \alpha_{1} + \Gamma \eta_1 \alpha_{2} {||\boldsymbol{\Delta}^p_{it}||}^2) {||\textbf{e}_{t-1}||}\\
    + (L \eta_1 \alpha_{1} + L \eta_1 \alpha_{2} {||\boldsymbol{\Delta}^p_{it}||}^2 )|| {\textbf{e}_{t-1}||}^2 
\end{gathered}
\end{equation}
Replacing expressions of $\alpha_{1}$, $\alpha_{2}$, $\textbf{e}_t$ and $\textbf{e}_{t-1}$ in Eq. \ref{conv_eq:10}, we find: 

\[ ||\textbf{x}_{it} - \textbf{x}^{*}_{i}|| \leq \varepsilon_1 ||\textbf{x}_{i(t-1)} - \textbf{x}^{*}_{i}|| + \varepsilon_2 {||\textbf{x}_{i(t-1)} - \textbf{x}^{*}_{i}||}^2 \]
Where, 

\[ {\varepsilon_1} = 1 + \frac{\Gamma \eta_1 }{\rho} + \frac{\Gamma \eta_1 {||\boldsymbol{\Delta}^p_{it} ||}^2}{\rho^2 + \rho {\boldsymbol{\Delta}^p_{it}}^T \boldsymbol{\Delta}^p_{it}}\]  \[{\varepsilon_2} = \frac{L \eta_1}{\rho} + \frac{L \eta_1 {||\boldsymbol{\Delta}^p_{it} ||}^2}{ \rho^2 + \rho {\boldsymbol{\Delta}^p_{it}}^T \boldsymbol{\Delta}^p_{it}}\]
\end{proof}
\textbf{Analysis : } When \(\eta_1\) and \(\Gamma\) are fixed, \(\varepsilon_1\) decreases as \(\rho\) increases, leading to faster linear convergence \cite{wright2006numerical}. However, if \(\rho\) becomes too large, it can result in numerical instability. Similarly, \(\varepsilon_2\) decreases as \(\rho\) increases (when  \(\eta_1\) and \(L\) are fixed), indicating faster quadratic convergence \cite{wright2006numerical}, but excessively large values of \(\rho\) can also cause instability. Values of \(\rho\) that are either too large or too small can result in instability when computing the inverse Hessian. Hence, selecting an appropriate range for \(\rho\) is crucial to ensure numerical stability while optimizing the convergence rate.

\subsection{Computation and transmission costs}
\begin{table*}[ht]
\centering
\newcolumntype{M}{>{\centering\arraybackslash}m{1.3cm}} 
\newcolumntype{P}{>{\raggedright\arraybackslash}p{4cm}} 

\begin{tabular}{|M|M|M|M|M|M|M|M|M|M|}
\hline
\textbf{Methods} & \textbf{FedAvg} & \textbf{FedProx} & \textbf{FedAvg FT} & \textbf{FedProx FT} & \textbf{Ditto} & \textbf{FedRep} & \textbf{FedALA} & \textbf{FedDWA} & \textbf{pFedSOP} \\
\hline
Local computation   & $O(N_i d)$ & $O(N_i d)$ & $O(N_i d + N_i d)$ & $O(N_i d + N_i d)$ & $O(N_i d + N_i d)$ & $O(N_i d)$ & $O(N_i d + N_i d)$ & $O(N_i d + N_i d)$ & $O(N_i d + 2d)$ \\
\hline
Server computation & $O(K'd)$ & $O(K'd)$ & $O(K'd)$ & $O(K'd)$ & $O(K'd)$ & $O(K' d_r)$ & $O(K'd)$ & $O({K'}^2 d)$ & $O(K'd)$ \\
\hline
\end{tabular}
\caption{Comparisons of local and server computation costs among different FL methods. Here, $N_i$ represents number of samples in $i$ - th client, $d$ is number of model parameters, $K'$ is number of available clients in each FL iteration and $d_r$ is the number of parameters in the feature extractor part of the model.}
\label{tab:complx}
\end{table*}
The computation costs associated with different methods are depicted in Table \ref{tab:complx}. From this table, it can be noticed that due to additional data feeding to the local model while finding personalized local update, existing PFL methods such as FedAvg FT, FedProx FT, Ditto, FedALA, FedDWA etc requires additional $O(N_i d)$ local computation, where in pFedSOP requires additional $O(2d)$ local computation as compared to generic FL i.e. FedAvg. Here, $N_i$ is the number of samples in $i$ - th client. So, we can say that the local computation cost of pFedSOP is competitive to FedAvg and less than previously mentioned existing PFL methods. The server computation cost of pFedSOP is similar to most FL methods but lower than that of FedDWA. As pFedSOP is associated with transmission of gradient updates in both the directions (server to client and client to server), the transmission cost of pFedSOP is $O(2d)$, which is the same as FedAvg, whereas FedDWA incurs a transmission cost of $O(2d)$, plus an additional $O(d)$ costs from client to the server.
\section{Experiments}
\label{sc5}
The proposed PFL algorithm, pFedSOP, is validated through extensive experiments on several heterogeneously partitioned image-classification datasets. It is compared with state-of-the-art PFL algorithms, as well as FedAvg and FedProx. This section details the datasets used, the model employed, the methods compared, evaluation metrics, implementation details, results, and analysis. The experimental results demonstrate the effectiveness of pFedSOp in achieving better test accuracy and a faster decrease in training loss while optimizing customized local models. Additionally, we analyze the effect of our designed personalization component (PC) in the PFL algorithm and conduct a sensitivity analysis of the hyperparameters. 
\subsection{Datasets}
We conduct extensive experiments on three benchmark image classification datasets, namely, CIFAR10, CIFAR100 \cite{krizhevsky2009learning} and Tiny ImageNet \cite{chrabaszcz2017downsampled}. The CIFAR-10 dataset consists of 50,000 training samples and 10,000 test samples. Each sample is a color image of size 3 x 32 x 32. CIFAR-10 includes 10 classes, with 6,000 samples per class. Similarly, the CIFAR-100 dataset also contains color images of size 3 x 32 x 32, with 50,000 training samples and 10,000 test samples. However, CIFAR-100 has 100 classes, with 600 samples per class. Tiny ImageNet features 100,000 training samples and 10,000 test samples across 200 classes. Each class has overall 550 samples, and each sample is a color image of size 3 x 64 x 64. In our experiments, we merge train and test samples for each of the datasets and create heterogeneous partitions across $K = 100$ number of clients.

For each dataset, we conduct experiments under two different heterogeneous data settings. The first setting is inspired by the FedDWA paper \cite{liu2023feddwa}, where the dataset is divided across clients using a Dirichlet distribution with a Dirichlet parameter of $\text{Dir} = 0.07$. The second setting, referred to as the pathological heterogeneous setting, is inspired by the FedALA paper \cite{zhang2023fedala}. In this setting, we create $s$ number of shards, each of size $z$, such that $s \times z$ equals the total number of samples. We then divide $b$ number of shards to each of the $K = 100$ clients. Following FedALA, we use $ z = $200, 600 and 1000 for CIFAR-10, CIFAR-100, and Tiny ImageNet, respectively. This means that each client receives samples from $b = 2$ classes for CIFAR10, $b = 6$ classes for CIFAR100 and $b = 10$ classes for Tiny ImageNet. 

After creating heterogeneous partitions across clients, each client's data is divided into a training set and a test set. We randomly assign 80 $\%$ of the total samples for training, and the remaining 20 $\%$ is kept for evaluation.

\subsection{Experimental Setup}

In this section, we discuss about models used for classification tasks, Compared methods, performance metrics and implementation details. 

\subsubsection{Models and loss function}

For federated image classification of CIFAR10 dataset, we use Resnet 18 \cite{he2016deepresnet} convolutional neural Network (CNN) model. For CIFAR100 and Tiny ImageNet, we use Resnet 9 \cite{he2016deepresnet} CNN model. For training the local model, we use categorical cross entropy loss function.
\subsubsection{Compared methods}
We compare our proposed method, pFedSOP, with generic FL methods such as FedAvg and FedProx, as well as state-of-the-art PFL methods such as Ditto, FedRep, FedALA, and FedDWA. Details about these methods are mentioned in the related work section (Section \ref{rel_w}). Additionally, we compare our algorithm with FedAvg FT and FedProx FT, where the global model (of FedAvg or FedProx) is fine-tuned using local data before being used as the initial model for local training.

\subsubsection{Performance metrics}
In each communication round, we calculate the test accuracy (=  $\frac{\text{Number of true predictions}}{\text{Number of samples}}$) of the personalized local model for each participating client, based on their own local test data. We then compute the average test accuracy across all participating clients. Additionally, we measure the average training loss across all participating clients in each communication round. We also track the highest accuracy achieved by each client within the predefined communication round and compute the average of these highest accuracies across all clients.  
\begin{table*}[h!]
\centering
\caption{Best accuracy achieved and the average time (seconds) taken in each communication round by different methods on the CIFAR-10, CIFAR-100, and Tiny ImageNet datasets are reported for both heterogeneous settings (Dir(0.07) and Pathological) within the predefined communication rounds $T = 100$.}
\newcolumntype{l}{>{\centering\arraybackslash}m{1.2cm}}
\begin{tabular}{*{11}{l}}
\hline
\hline
\multicolumn{2}{c}{} & \multicolumn{3}{c}{CIFAR10} & \multicolumn{3}{c}{CIFAR100} & \multicolumn{3}{c}{Tiny ImageNet} \\
\cmidrule(r){3-5}  \cmidrule(lr){6-8} \cmidrule(lr){9-11} 

\multicolumn{2}{c}{Method} & Dir(0.07) & Pathological & Average time/round & Dir(0.07) & Pathological & Average time/round & Dir(0.07) & Pathological & Average time/round  \\
\hline
\hline
\multicolumn{2}{c}{FedAvg}  & 73.12 & 71.97 & 29.85 & 34.79 & 33.18 & 28.86 & 27.31 & 27.82 & 60.71  \\
\multicolumn{2}{c}{FedProx}  & 72.84 & 72.79 & 30.52 & 34.21 & 32.77 & 30.14 &26.54&27.03&63.69  \\
\multicolumn{2}{c}{FedAvg FT}  & 86.85 & 88.83& 43.54 & 55.65 & 65.81& 47.68&44.61&57.54&106.81 \\
\multicolumn{2}{c}{FedProx FT}  & 86.20 & 89.31& 45.33 & 55.30 & 65.19& 48.99&43.79&56.86&109.11  \\
\multicolumn{2}{c}{Ditto}  & 83.60 & 89.82& 47.29 & 57.99 & 71.47& 51.13&42.65&57.76&111.58  \\
\multicolumn{2}{c}{FedRep}  & 85.36 & 86.76& 31.99 & 56.91 & 68.52& 35.37&38.05&44.26&76.73  \\
\multicolumn{2}{c}{FedALA}  & 86.65 & 89.44& 48.40 & 57.69 & 67.92& 49.42&44.89&57.48&109.96 \\
\multicolumn{2}{c}{FedDWA}  & 88.64 & 91.80& 47.09 & 66.75 & 78.18& 49.93&60.18&69.42&110.51 \\
\multicolumn{2}{c}{pFedSOP}  & \textbf{90.64} & \textbf{93.63} & 30.55 & \textbf{71.36} & \textbf{82.75}& 29.38&\textbf{65.42}&\textbf{74.20}&60.09  \\
\hline
\hline
\end{tabular}
\label{tab:1}
\end{table*}
\subsubsection{Implementation details}
We conduct our experiments with partial client participation instead of full client participation in each communication round. Similar to FedDWA, we consider 20 $\%$ client participation out of $K = 100$ clients. This selection of partial clients in each communication round is executed in such a way that each client is selected with equal probability to participate in the FL training. To find the best performance from each method, we experiment with different sets of hyperparameters. The best set of hyperparameters for each method is determined by considering the minimum training loss and the maximum test accuracy across all available clients and communication rounds. We compare our proposed method with others based on this best performance. We use a learning rate $\in \{1, 0.1, 0.01, 0.001, 0.0001\}$ for all methods, FedProx's proximal term $\mu \in \{1, 0.1\}$, Ditto's $\lambda \in \{1, 0.1, 0.01\}$, and set the regularization term of pFedSOP to $\rho = 1$. The $\lambda$ for pFedSOP is also set to 1, and the local batch size is 50. For all methods, we use one local epoch. All experiments are conducted using a Tesla V100 GPU and PyTorch 1.12.1+cu102. We use seed = 0 to reproduce the results. While conducting our experiments, we use same initialization and same settings for all the methods for each heterogeneous FL setup of each dataset, which ensure the fairness. 

\begin{figure}[ht]
    \centering
    \begin{minipage}{0.4\textwidth}
        \centering
        \includegraphics[width=\linewidth]{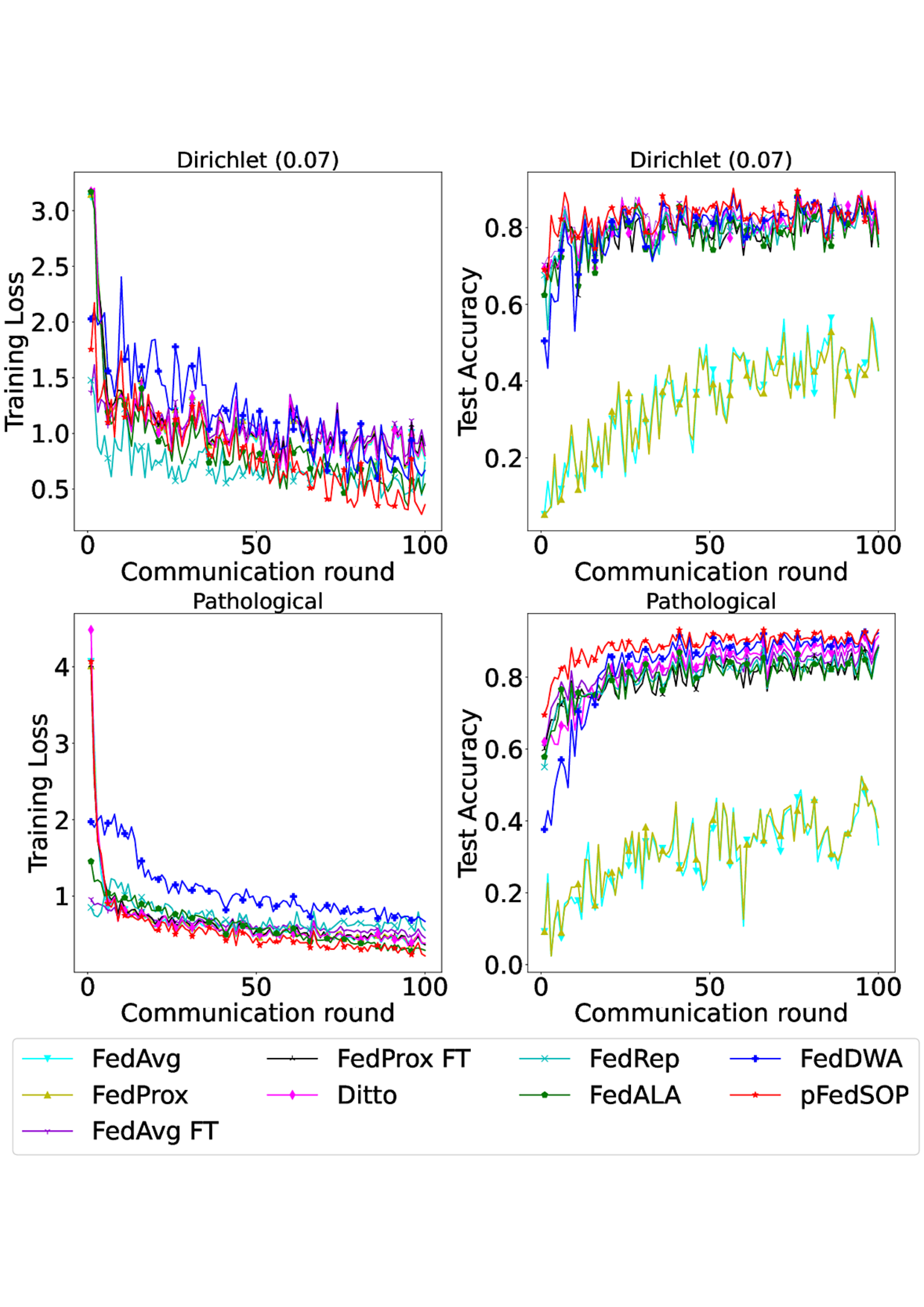}
    \end{minipage}\hfill
    \caption{Comparison of various methods in terms of communication round-wise average training loss, as well as average test accuracy, across all available clients for both heterogeneous settings of the CIFAR10 dataset. The top row shows results from the Dirichlet-based federated learning setting, while the bottom row displays results from the pathological federated learning setting.}
    \label{fig:1}
\end{figure}

\begin{figure}[ht]
    \centering
    \begin{minipage}{0.4\textwidth}
        \centering
        \includegraphics[width=\linewidth]{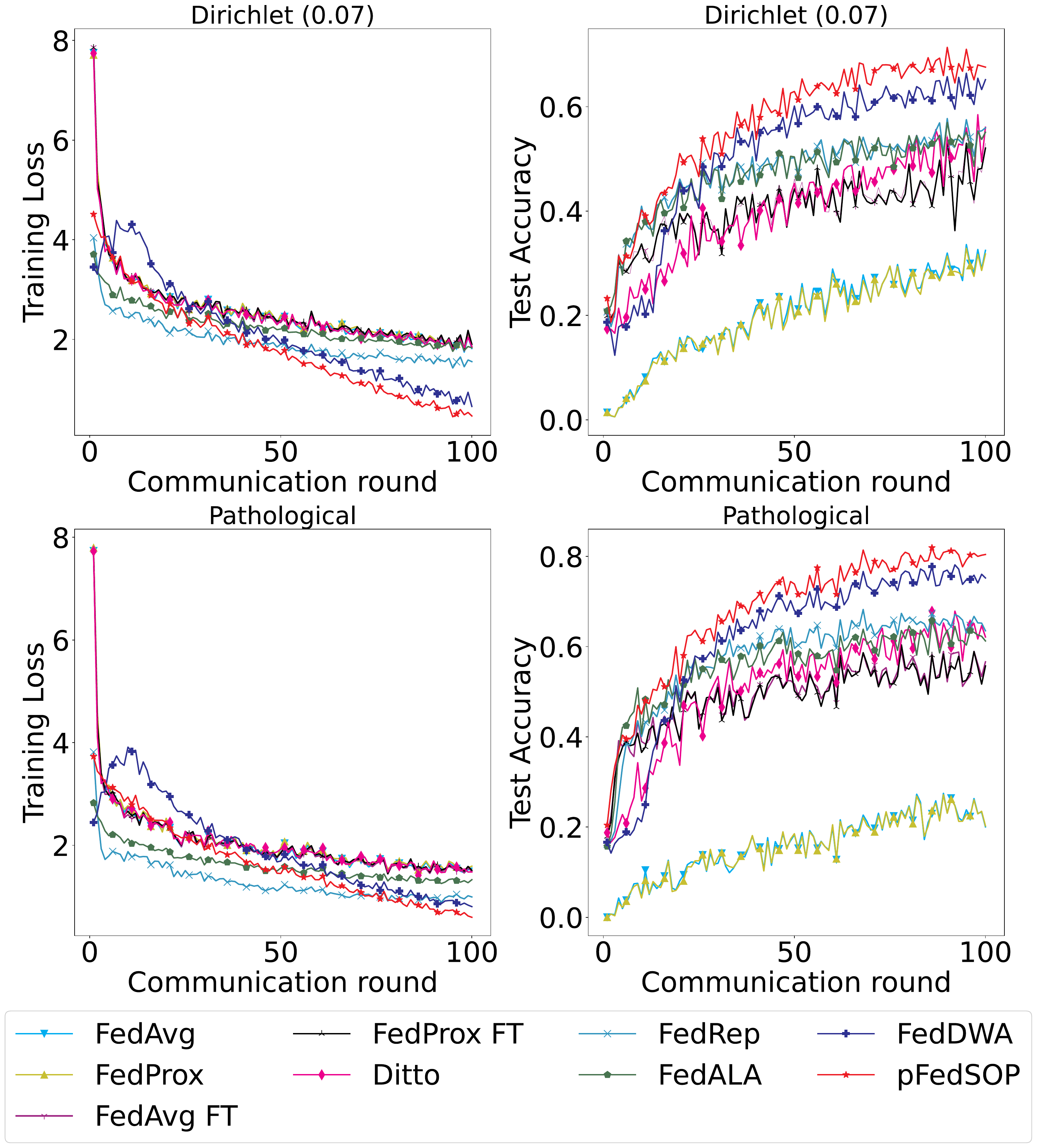}
    \end{minipage}\hfill
    \caption{Comparison of various methods in terms of communication round-wise average training loss, as well as average test accuracy, across all available clients for both heterogeneous settings of the CIFAR100 dataset. The top row shows results from the Dirichlet-based federated learning setting, while the bottom row displays results from the pathological federated learning setting.}
    \label{fig:2}
\end{figure}
\begin{figure}[ht]
    \centering
    \begin{minipage}{0.4\textwidth}
        \centering
        \includegraphics[width=\linewidth]{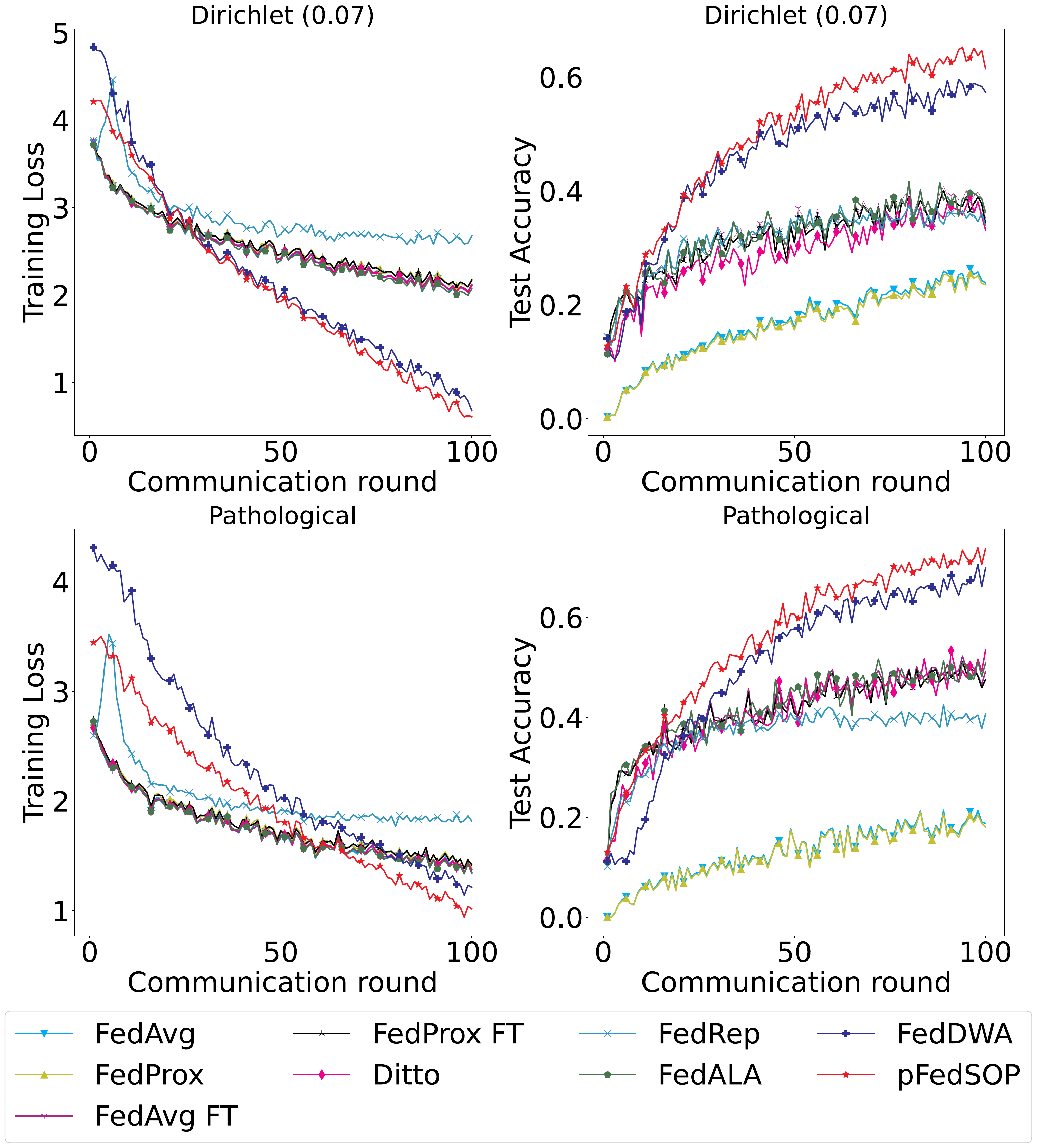}
    \end{minipage}\hfill
    \caption{Comparison of various methods in terms of communication round-wise average training loss, as well as average test accuracy, across all available clients for both heterogeneous settings of the Tiny ImageNet dataset. The top row shows results from the Dirichlet-based federated learning setting, while the bottom row displays results from the pathological federated learning setting.}
    \label{fig:3}
\end{figure}

\subsection{Results}
This section demonstrates the results of our experiments in both heterogeneous settings and compares the proposed method with existing FL methods in terms of efficiency, focusing on reducing training loss and achieving better test accuracy. Additionally, this section analyzes the effectiveness of our designed personalization component and examines the sensitivity of the hyperparameters.

\subsubsection{Comparisons with existing FL methods}
To validate the effectiveness of pFedSOP, we compare it with existing methods in terms of lower losses and higher test accuracy across all clients. This comparison is based on extensive experiments conducted on several benchmark datasets, with two different heterogeneous FL settings for each dataset. The experimental results are presented in Table \ref{tab:1} and Figures \ref{fig:1}, \ref{fig:2} $\&$ \ref{fig:3}. Table \ref{tab:1} shows the highest test accuracy achieved by different methods across all clients over all the communication rounds. To construct this table, we record the highest accuracy achieved by each client across all communication rounds and calculate the average of these highest accuracies for each method. This is done separately for both heterogeneous settings and each dataset. Additionally, we include the average time taken by each method to complete one communication round for each dataset. From this table, it can be observed that pFedSOP achieves the highest test accuracy within the predefined communication rounds $T=100$ compared to existing methods in both heterogeneous settings for each dataset. From this table, it can also be observed that pFedSOP requires a competitive amount of time to complete one communication round, similar to FedAvg, FedProx, and FedRep, but less than the time required by FedAvg FT, FedProx FT, Ditto, FedALA, and FedDWA. This indicates that pFedSOP has lower computational complexity compared to FedAvg FT, FedProx FT, Ditto, FedALA, and FedDWA, and competitive complexity compared to FedAvg, FedProx, and FedRep. The training efficiency of pFedSOP as compared to exiting methods is demonstrated through the Figures \ref{fig:1}, \ref{fig:2}, and \ref{fig:3}. We contract each of these figures, by plotting communication round wise average training loss and average test accuracy across all the available clients (i.e average over 20 $\%$ clients of total 100 clients in each communication round) for each method in each of the heterogeneous setting of each dataset. From these figures, it can be noticed that pFedSOP reduces training loss more quickly than existing methods with respect to communication rounds and achieves better test accuracy at various communication rounds. As, computation time per communication round of pFedSOP is comparatively lower, it may be claimed that pFedSOP can reduce losses and can achieved better accuracy in less time as compared to existing methods. As we use same initialization and same settings for all the methods while comparing, from the findings of these figures, we may claim that the efficiency of pFedSOP is better than existing methods in terms of accelerating the training of the personalized local models while targeting for better test accuracy from the personalized models.  
\begin{figure}[ht]
    \centering
    \begin{minipage}{0.25\textwidth}
        \centering
        \includegraphics[width=\linewidth]{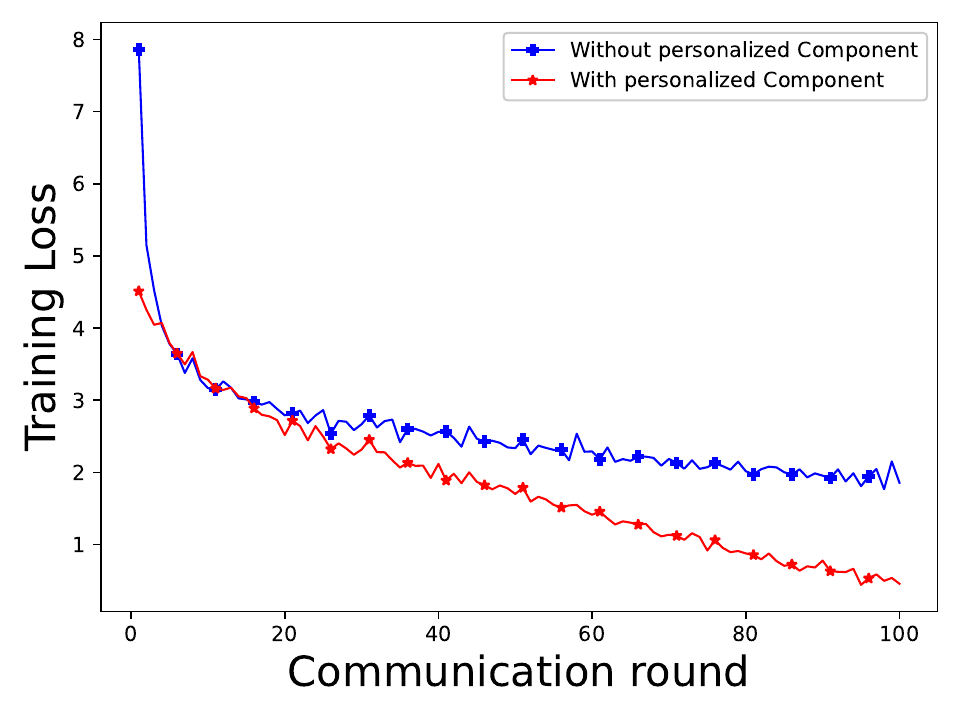}
    \end{minipage}
    \caption{Effect of personalization component (PC) on the performance of pFedSOP in term of communication round wise average training loss across all the available clients.}
    \label{fig:4}
\end{figure}

\subsubsection{Ablation Study}
This section demonstrates the effectiveness of the personalized component (PC) in pFedSOP, aimed at improving the PFL algorithm. To assess this, we conduct experiments comparing pFedSOP with and without the PC, evaluating their training efficiency in terms of training loss and test accuracy. These experiments are conducted on the CIFAR-100 dataset with a Dirichlet-based heterogeneous setting. The results are shown in Figure \ref{fig:4} and Table \ref{tab:2}. From these, it can be observed that pFedSOP with the PC improves FL training by more rapidly reducing training loss and achieving better test accuracy compared to pFedSOP without the PC.
\begin{table}[t]
\centering
\renewcommand{\arraystretch}{1.5}
\begin{tabular}{|c|c|c|}
\hline
Components  & without PC & with PC   \\
\hline
Accuracy  & 55.65 & 70.99    \\ 
\hline
\end{tabular}
\caption{Effect of personalized component (PC) in pFedSOP in term of best accuracy achieved. }
\label{tab:2}
\end{table}

\begin{table}[t]
\centering
\renewcommand{\arraystretch}{1.5}
\begin{minipage}{0.45\textwidth}
    \centering
    \begin{tabular}{|c|c|c|c|c|}
    \hline
    $\rho$  & 1 & 0.1 & 0.01 & 0.001   \\
    \hline
    Accuracy  & 70.99 & 71.36 & 71.12 & 71.24   \\ 
    \hline
    \hline
    \end{tabular}

\end{minipage}\hfill
\begin{minipage}{0.45\textwidth}
    \centering
    \begin{tabular}{|c|c|c|c|c|}
    \hline
    $\lambda$  & 5 & 2.5 & 1 & 0.5   \\
    \hline
    Accuracy  & 71.01 & 71.34 & 70.99 & 71.34   \\ 
    \hline
    \end{tabular}

\end{minipage}
\caption{Effect of $\rho$ and $\lambda$ on the performance of pFedSOP in term of best accuracy achieved.}
\label{tab:hyper}
\end{table}

\subsubsection{Parameters Sensitivity Analyses}
\begin{figure}[ht]
    \centering
    \begin{minipage}{0.23\textwidth} 
        \centering
        \includegraphics[width=\linewidth]{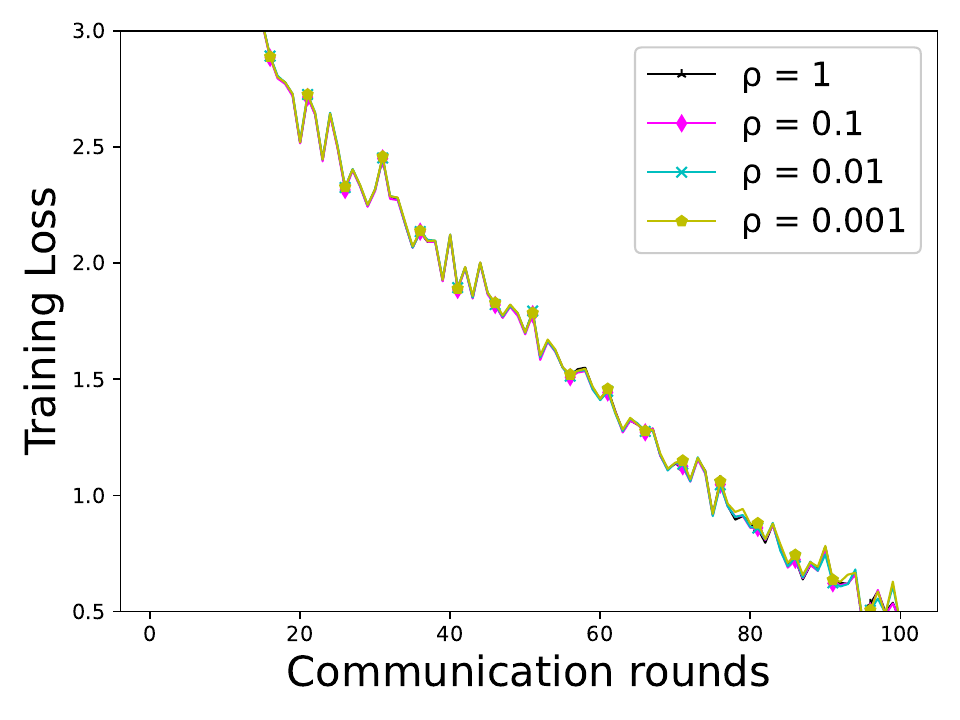}

    \end{minipage}\hfill
    \begin{minipage}{0.23\textwidth} 
        \centering
        \includegraphics[width=\linewidth]{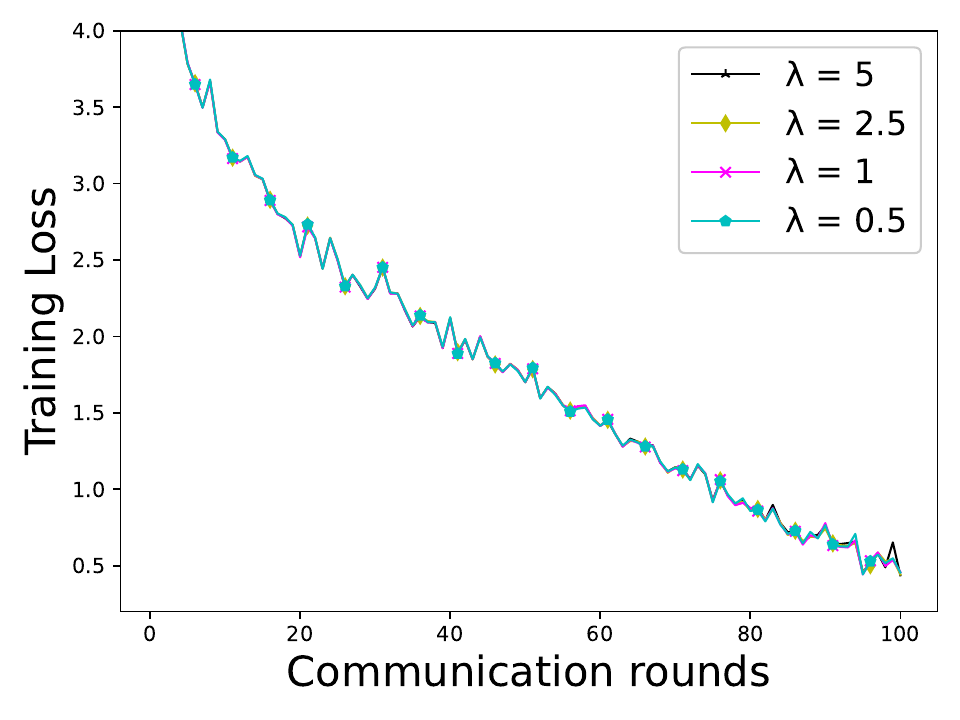}
    \end{minipage}
    \caption{Effect of $\rho$ and $\lambda$ on the performance of pFedSOp in term of communication round wise average training loss across all the available clients.}
    \label{fig:5}
\end{figure}
pFedSOP has two active hyperparameters. One is $\lambda$ of the Gompertz function, which is used to find the personalized update for each local client, and the other is the Hessian regularization parameter $\rho$, which is used for updating the personalized local models with second-order optimization. The effectiveness of these hyperparameters is analyzed through experiments on the CIFAR-100 dataset with the Dirichlet distribution-based FL setting, where we use a learning rate of $\eta = 0.01$. To analyze the sensitivity of $\rho$, we fix $\lambda = 1$ and compare the performance of pFedSOP for different values of $\rho \in \{1, 0.1, 0.01, 0.001\}$. To analyze $\lambda$, we fix $\rho = 1$ and compare the performance of pFedSOP for different values of $\lambda \in \{5, 2.5, 1, 0.5\}$. The results of these sensitivity analyses are shown in Figure \ref{fig:5} and Tables \ref{tab:hyper}. The findings from this figure and table suggest that, for better performance of pFedSOP, the value of $\rho$ should be chosen such that it is not far from the learning rate $\eta$. From this figure and table, we observe that there are no significant changes in the performance of pFedSOP when using $\lambda \in \{5, 2.5, 1, 0.5\}$. Any value of $\lambda \in \{5, 2.5, 1, 0.5\}$ can be recommended. 

\section{Conclusions and Future work}
\label{sc6}
Given the significant data heterogeneity across clients, a single global model in traditional federated learning often struggles with generalization, motivating the need for Personalized Federated Learning (PFL). However, existing PFL algorithms are hindered by slow training due to their reliance solely on first-order optimization. To address this, we propose pFedSOP, a method designed to enhance PFL efficiency by employing Fisher Information Matrix (FIM)-based second-order optimization, along with a Gompertz function-based personalized local gradient update for updating personalized local models. This approach accelerates the training of local models by efficiently utilizing second-order optimization along with relevant global information in each client. pFedSOP maintains competitive local computation costs compared to FedAvg, a common challenge in many existing PFL methods. Our experimental results demonstrate that pFedSOP significantly improves PFL training outcomes compared to existing federated learning approaches, while maintaining or reducing local computation costs. Looking ahead, we aim to extend our work to accelerate the training of personalized vertical federated learning.

\bibliographystyle{IEEEtran}
{\small
\bibliography{ref}}

\begin{IEEEbiography}[{\includegraphics[width=1in,height=1.25in,clip,keepaspectratio]{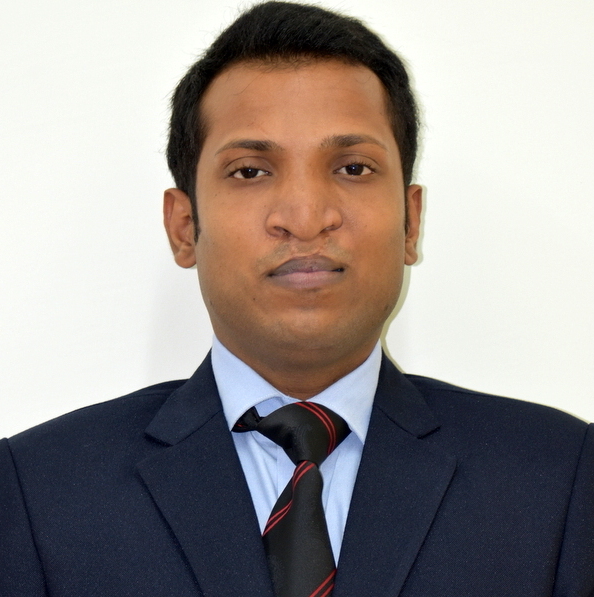}}]%
{Mrinmay Sen} is a joint research scholar in the Department of Artificial Intelligence, Indian Institute of Technology Hyderabad and the Department of Computing Technologies, Swinburne University of Technology, Hawthorn. He completed his M.Tech from Indian Institute of Technology Dhanbad and B.E. from Jadavpur University, Kolkata. His research interest broadly includes Federated optimization and its real life applications, machine learning and deep learning. 
\end{IEEEbiography}

\begin{IEEEbiography}[{\includegraphics[width=1in,height=1.25in,clip,keepaspectratio]{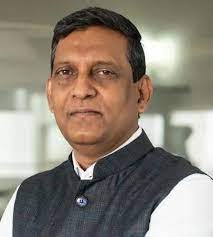}}]%
{Dr. C. Krishna Mohan} is currently a Professor in the Department of Computer Science and Engineering at the Indian Institute of Technology Hyderabad. His current research interests include video content analysis, computer vision, machine learning, deep learning. He has published more than 90 papers in various international journals and conference proceedings. He is a Senior Member of IEEE, Member of ACM, AAAI Member, Life Member of ISTE.
\end{IEEEbiography}
\end{document}